\let\oldmin=\min
\title{A flexible specification approach for verifying total correctness of fine-grained concurrent modules}
\author{Justus Fasse}
\affiliation{%
  \institution{KU Leuven}
  \department{Department of Computer Science}
  \streetaddress{Celestijnenlaan 200A box 2402}
  \city{Leuven}
  \postcode{3001}
  \country{Belgium}%
}
\email{justus.fasse@kuleuven.be}
\author{Bart Jacobs}
\affiliation{%
  \institution{KU Leuven}
  \department{Department of Computer Science}
  \streetaddress{Celestijnenlaan 200A box 2402}
  \city{Leuven}
  \postcode{3001}
  \country{Belgium}%
}
\email{bart.jacobs@kuleuven.be}
\begin{document}

\begin{abstract}
A well-established approach to proving progress properties such as deadlock-freedom and termination is to associate \emph{obligations} with threads.
For example, in most existing work the proof rule for lock acquisition prescribes a standard usage protocol by burdening the acquiring thread with an obligation to release the lock.
The fact that the obligation creation is hardcoded into the acquire operation, however, rules out non-standard clients e.g.\ where the release happens in a different thread.

We overcome this limitation by instead having the blocking operations take the obligation creation operations required for the specific client scenario as arguments.
We dub this simple instance of higher-order programming with auxiliary code \sassy.
To illustrate \sassy, we extend \heaplang, a simple, higher-order, concurrent programming language with erasable code and state.
The resulting language gets stuck if no progress is made.
Consequently, we can apply standard \emph{safety} separation logic to compositionally reason about \emph{termination} in a fine-grained concurrent setting.

We validated \sassy by developing (non-foundational) machine-checked proofs of representative locks---an unfair \spinlock (competitive succession), a fair \ticketlock (direct handoff succession) and the hierarchically constructed \cohortlock that is starvation-free if the underlying locks are starvation-free---against our specifications using an encoding of the approach in the \verifast program verifier for C and \java.
\end{abstract}

\maketitle

\section{Introduction}%
\label{sec:intro}%

\begin{figure}[h]
  \begin{tabular}{c||c||c}
    \lstinputlisting{code/motivate-spec/a.hllt} &
    \lstinputlisting{code/motivate-spec/b.hllt} &
    \lstinputlisting{code/motivate-spec/c.hllt} \\
  \end{tabular}
  \caption[Motivating example]{The lock handoff between the left and middle thread is challenging for the modular verification of progress.
    This example reflects the key challenge also encountered when verifying a starvation-free \cohortlock.
    Here, we simplify the handoff by ``synchronizing'' via the exit statement, guarded by the dereference (\lstinline{!}) of flag \lstinline{f} initialized to true (not shown).
  }%
  \label{fig:motivating-example}
\end{figure}

\begin{figure}
  \begin{subfigure}[t]{.48\textwidth}
    \begin{tabular}{l}
      \lstinputlisting{./code/motivate-sassy-spinlock/sketch/classic-acq.hllt} \\
      \lstinputlisting{./code/motivate-sassy-spinlock/sketch/classic-rel.hllt}
    \end{tabular}
    \caption{Classic obligation handling. The \spinlock module creates an obligation when the lock is acquired, which can only be destroyed by calling release.}%
    \label{fig:classic-spinlock}
  \end{subfigure}\hfill%
  \begin{subfigure}[t]{.48\textwidth}
    \begin{tabular}{l}
      \lstinputlisting{./code/motivate-sassy-spinlock/sketch/sassy-acq.hllt} \\
      \lstinputlisting{./code/motivate-sassy-spinlock/sketch/sassy-rel.hllt}
    \end{tabular}
    \caption{Parameterized (\sassy) obligation handling.
      The client controls which obligations are created when acquiring the lock (\aalpha) and which can be relied upon to justify the blocking (\abeta).}%
    \label{fig:sassy-spinlock}
  \end{subfigure}
  \caption{\Spinlock implementation with two approaches to obligation-handling expressed as \textcolor{myghostcode}{auxiliary} (\textit{pseudo}) code.
    The angle brackets indicate the atomic execution of the contained expression to associate auxiliary code with the two cases of \lstinline{CAS x y z} (compare-and-set). If $\deref x = y$, \lstinline{CAS} assigns $y$ to $x$ and return true. Otherwise false is returned, leaving $x$ unchanged.
  }%
\label{fig:spinlock-classic-sassy}
\end{figure}

Consider the program in Figure~\ref{fig:motivating-example}.
We consider \lstinline{acquire} non-primitive e.g.\ implemented as a simple busy-waiting \spinlock (Figure~\ref{fig:spinlock-classic-sassy}).
Naturally, \lstinline{acquire} (the \spinlock module) is developed independently from specific client scenarios.
It is therefore oblivious to the non-standard usage pattern by the left and middle thread: the lock acquired by the left thread is released by the middle thread.
Nevertheless, termination of the right thread's \lstinline{acquire} crucially depends on the left and middle thread's correct synchronization (simulated via \lstinline{exit}).
One application of this pattern are \cohortlock{}s~\cite{DBLP:conf/ppopp/DiceMS12}.
In a \cohortlock the threads are partitioned into groups (cohorts) competing for a top-level lock.
When releasing, a thread prefers passing ownership of the top-level lock directly to another member of the same cohort without calling \lstinline{release}.
Such handoffs are synchronized via additional per-cohort locks (see Section~\ref{sec:cohortlock}).

We believe the \cohortlock is a good, self-contained, benchmark for compositional total correctness verification approaches.
Importantly, its starvation-freedom depends both on its bounded unfairness and the starvation-freedom of the underlying locks.

In this paper we propose a two-step verification approach for such programs.
First, based on the Ghost Signals constructs of \citet{DBLP:conf/cav/ReinhardJ20,Reinhard2021GhostSignalsTR}, we instrument the program-under-verification in a \emph{terminating} (fuel-based) language with \emph{obligations}-based auxiliary%
  \footnote{We distinguish logic-level ghost operations and erasable auxiliary code from now on.}
  constructs to soundly generate fuel, thereby supporting busy-waiting.
In this language, infinite fairly scheduled executions are impossible (they would run out of fuel and get stuck).
Consequently, proving \emph{safety} of the instrumented program establishes termination of the erased program under fair scheduling.

Classic modular obligations-based verification approaches for blocking programs~\cite{DBLP:conf/concur/Kobayashi06,DBLP:conf/esop/LeinoMS10,DBLP:conf/ecoop/BostromM15,DBLP:journals/toplas/JacobsBK18,DBLP:conf/cav/ReinhardJ20,Reinhard2021GhostSignalsTR} require a lock to be released by the thread that acquired it and therefore do not support the program in Figure~\ref{fig:motivating-example}.
Specifically, the acquiring thread is charged with an obligation which it can only get rid of by releasing the lock (Figure~\ref{fig:classic-spinlock}).
We lift this restriction by applying the higher-order programming specification approach of \citet{DBLP:conf/popl/JacobsP11} to the problem of modularly generating fuel.
This is possible since grounding the fuel-generation in the language sidesteps the open problem of reasoning about the fair scheduling assumption in a higher-order, step-indexed, logic (see Section~\ref{sec:related-work}).
Concretely, when an operation's progress depends on the client, we parameterize it with the auxiliary function $\abeta$ (e.g.\ Figure~\ref{fig:sassy-spinlock}).
While blocked on the client, invoking $\abeta$ must generate fuel.
The other auxiliary parameter \aalpha (Figure~\ref{fig:sassy-spinlock}) directly corresponds to \citet{DBLP:conf/popl/JacobsP11}'s auxiliary parameter and permits the manipulation of auxiliary state when the operation takes effect (cf.\ linearization point~\cite{DBLP:journals/toplas/HerlihyW90,DBLP:journals/pacmpl/BirkedalDGJST21}, see Section~\ref{sec:lat}) e.g.\ the moment the lock is acquired.
Equipped with $\abeta$ and $\aalpha$ it is quite simple to express the client-dependent progress argument (see Figure~\ref{fig:motivating-example-solution-sketch}).

\begin{figure}
  \begin{tabular}{c||c||c}
    \lstinputlisting{code/motivate-spec-aux/a.hllt} &
    \lstinputlisting{code/motivate-spec-aux/b.hllt} &
    \lstinputlisting{code/motivate-spec-aux/c.hllt} \\
  \end{tabular}
  \caption[Solution sketch]{The pieces of auxiliary code passed to \lstinline{acquire} and \lstinline{release} encapsulate the client-dependent part of the progress proof: $\aalpha$ is executed at the point of lock acquisition/release; $\abeta$ justifies \lstinline{acquire}'s blocking. The left thread's blocking is justified by obligation \emph{o2} held by the right thread; the right thread's blocking is justified by obligation \emph{o1} held by the middle thread. Threads must destroy their obligations before they finish.}%
  \label{fig:motivating-example-solution-sketch}
\end{figure}

We gradually introduce \sassy, beginning with its operational perspective.
First, we introduce the fuel-based, terminating, language we use in the remainder of the paper (Section~\ref{sec:overview}).
Next, modular programming in this language allows us to complete the sketch of Figure~\ref{fig:motivating-example-solution-sketch} (Section~\ref{sec:motivating-example-revisited}).

Afterwards, we turn towards the issue of expressive, modular specifications.
We begin by recalling prior work on such specifications to compositionally prove \emph{partial} correctness of fine-grained concurrent data structures (Section~\ref{sec:lat}).
The next section introduce our \emph{total} correctness specifications for unfair blocking modules (Section~\ref{sec:tclat-unfair}).
Before defining our specifications for fair blocking modules (Section~\ref{sec:tclat-fair}) we discuss modular programming of fair modules in Section~\ref{sec:modular-programming-of-fair-modules}.
To illustrate these specifications, we apply them to several case studies: an unfair \spinlock, a fair \ticketlock and the hierarchically constructed \cohortlock (Section~\ref{sec:cohortlock}).
Section~\ref{sec:tool-support} briefly discusses the non-foundational mechanization of these case studies.
Lastly, we discuss related work (Section~\ref{sec:related-work}) and conclude the paper (Section~\ref{sec:conclusion}).

\section{A terminating language}%
\label{sec:overview}

\begin{figure}
  \input{./figures/heaplanglt-syntax.tex}
  \caption[\heaplanglt syntax]{
    Syntax of \heaplanglt. %
    Differences to \heaplang are highlighted in blue and are introduced in Section~\ref{sec:overview}.
    We write $\Lam \lvar. \expr \eqdef \MyRecE{\any}{\lvar}{\expr}$, $\Lam. \expr \eqdef \MyRecE{\any}{\any}{\expr}$, and $\Alloc~\expr \eqdef \AllocN(1,\expr)$.}%
  \label{fig:heaplanglt-syntax}
\end{figure}

\begin{figure}
  \begin{mathpar}
    \let\prophnil\empty %
    \let\inferC\inferH
    \input{./figures/heaplanglt-extensions.tex}
  \end{mathpar}
  \caption[\heaplanglt auxiliary operations]{\heaplanglt operational semantics. $\lev \prec O$ means $\forall (s, \lev') \in O.\;\lev < \lev'$.}%
  \label{fig:heaplanglt-extensions}%
  \label{fig:obligation-life-cycle-management}%
\end{figure}

\begin{figure}
  \input{./figures/state/heaplanglt-state-simplified.tex}%
  \caption[\heaplanglt's state]{\heaplanglt's state extends \heaplang's state with an auxiliary heap, a signals heap mapping allocated signal locations to their level $\lev \in \Levdom = \Val$ and state (unset or set), and three maps that map each live or finished thread's id $\theta \in \Theta = \mathbb{N}^+$ to the thread's finite multiset of obligations (redundantly including each signal's level), its finite multiset of call permissions (each qualified by a degree $\delta \in \Degdom = \Val$), and its finite multiset of expect permissions (each associated with a signal and a degree). 
    The semantics is parameterized by a well-founded order on $\Levdom$ and a well-founded order on $\Degdom$. In the \appendixorsupplement we discuss how these can be constructed modularly.}%
  \label{fig:heaplanglt-state}
\end{figure}

In this section we present our approach, based on Ghost Signals \cite{DBLP:conf/cav/ReinhardJ20,Reinhard2021GhostSignalsTR}, for extending a programming language with auxiliary constructs to obtain a language that always terminates under fair scheduling. The auxiliary constructs are erasable in the sense that if a program instrumented with these constructs does not get stuck, the original program under the original language's semantics also terminates under fair scheduling and also does not get stuck. We present our approach by applying it to \heaplang~\cite[\S~12]{iris-technical-reference}, a simple imperative language with unstructured concurrency (\lstinline{fork}) and a higher-order store, to obtain \heaplanglt.

The syntax, small-step semantics of our extensions, and state of \heaplanglt are shown in Figs.~\ref{fig:heaplanglt-syntax}, \ref{fig:heaplanglt-extensions}, and \ref{fig:heaplanglt-state} respectively.
The ``head step'' (\cite[\S~12]{iris-technical-reference}) relation $(e,\istate) \hstepi[]{\theta} (e',\istate',\nil)$ denotes that in state $\istate$, expression $e$, executed by thread $\theta$, can step to expression $e'$ with new state $\istate'$ and no forked threads ($\nil$).
Adopting the notation of \iris' technical documentation, we write $\istate : \stateHeap[\ell \gets \val]$ to designate the state identical to $\istate$ with the exception of $\istate.\stateHeap(\ell)$ which stores $\val$.
Moreover we use the in-place update notation $m[\iota \mathrel{\HLOp\!\!\hookleftarrow} \omega]$ to denote $m[\iota \gets (m(\iota) \HLOp \omega)]$ for a binary operation $\HLOp$.
We introduce the various constructs below.

\heaplang's sole facility for synchronization, and non-termination, are recursive function calls.
Thus the first step to convert \heaplang into our terminating language \heaplanglt is to consume some (trans-)finite resource (fuel) at every function call.%
  \footnote{We will often write while loops, which can easily be converted into recursive functions.}
Concretely we use call permissions~\cite{DBLP:journals/toplas/JacobsBK18} as fuel.
Call permissions are qualified by a \emph{degree} $\delta$ taken from a well-founded universe of degrees $\Degdom$.
At each function application a call permission at the dedicated, minimal degree $\degree[0]$ is consumed (Figure~\ref{fig:heaplanglt-extensions}, \ruleref{BetaS}).
If no call permission of degree $\degree[0]$ is available the program gets stuck.
The command $\langkw{lower}\spac \degree\spac \langkw{to}\spac n\spac \langkw{times}\spac \degreeA\spac \langkw{at}\spac \theta$ replaces a call permission at degree $\degree$ of thread $\theta$ with $n$ call permissions of of lower degree $\degreeA$ (\ruleref{LowerS}).
Notice that despite creating new call permissions at degree $\degreeA$, this command lowers the multiset of call permissions under multiset order~\cite{DBLP:journals/cacm/DershowitzM79,DBLP:journals/aaecc/Coupet-GrimalD06}.
We assume that every other degree is greater than $\degree[0]$ and omit the lowering to $\degree[0]$ if it is clear from context.
Additionally, we drop the $\langkw{at}$ clause if $\theta$ is the current thread.
Each thread has its own call permissions.
After a fork, the forkee starts off with a stock of call permissions equal to that of the forker at fork-time.

Call permissions scale to sequential and non-blocking fine-grained concurrent programs~\cite{DBLP:journals/toplas/JacobsBK18} but fail to address blocking concurrency.
The following program illustrates the problem.\\
\lstinline{let f = ref true in fork (while !f do () done); f := false}\\
After initializing the flag to true and forking, the main thread unsets the flag.
Meanwhile, the forkee busy-waits until the flag is unset.
Under any fair scheduler the main thread will eventually unset the flag, allowing the program to terminate.
Nonetheless, no amount of call permissions suffices to fuel this program.
Following \citet{Reinhard2021GhostSignalsTR} we address this issue by \emph{generating} fuel.

Naturally, fuel must not be generated arbitrarily. A thread may generate fuel only if some unset \emph{signal} exists whose setting it can legitimately expect. Specifically, $\CreateSignal[\theta]{\lev}$ creates an unset signal, returning its identifier $s$, and assigns the obligation to set $s$ to thread $\theta$ (Figure~\ref{fig:obligation-life-cycle-management}, \ruleref{NewSignalS}).
The obligation and signal are associated with the level $\lev \in \Levdom$ ($\lev$ is pronounced Cyrillic ``$l$''), where $\Levdom$ is a well-founded universe of levels that ensures acyclicity.
$\SetSignal[\theta]{s}$ sets $s$'s state to $true$, thus fulfilling (destroying) the associated obligation to set $s$ held by thread $\theta$. We omit $\theta$ if it is equal to the current thread.
If a thread still has obligations when it finishes, the program gets stuck (Figure~\ref{fig:obligation-life-cycle-management}, \ruleref{FinishS}).
Obligations can be passed to the forkee at thread creation time.
The division of obligations is expressed as a list of signal identifiers passed as extra \textcolor{myghostcode}{erasable} argument to $\langkw{fork}$: $\langkw{fork}\spac{\color{myghostcode}\overline{s}}\spac\expr$.

Finally, we need the operation that generates fuel: \lstinline{Expect}. Consider a simplified step rule:
\begin{mathpar}
\inferrule[UnsoundExpect]
  { \istate.\stateSignals(s) = (\lev,\myfalse) \\
    \lev \prec \istate.\stateObligations(\theta) }
  {(\Expect\spac \theta\spac s\spac \degree, \istate) \hstepi{\theta} (\TT, \istate : \stateCallPerms[\theta \updmsetunion \set{\degree}], \nil)}
\end{mathpar}
If $s$ is unset and its level is less than those of all obligations held by $\theta$, \lstinline{Expect θ s δ} generates a call permission of degree $\delta$ for $\theta$.
Once again, we omit $\theta$ if it is the current thread.

With \lstinline{Expect} in hand we can annotate the program with \textcolor{black}{auxiliary code} to fuel the busy-waiting.
\begin{lstlisting} 
let f = ref true in /*let s = NewSignal 0 in*/
fork /*∅*/ (while /*⟨if !f then Expect s 0;*/ !f/*⟩*/ do () done);
f := false; /*SetSignal s; Finish*/
\end{lstlisting}
The main thread pre-creates the signal $s$.
At fork-time, the two auxiliary arguments express that the main thread keeps the obligation to set $s$ and all its call permissions.
In order to \lstinline{Finish}%
  \footnote{The instrumented \lstinline{fork} implicitly inserts a \lstinline{/*Finish*/} at the end of the forkee.}%
  , it has to set $s$, which it does just after unsetting the flag.
As a result, the \lstinline{Expect s} in the forkee generates the call permissions necessary to justify the busy-waiting while the main thread is (legitimately) expected to set the flag.
We use atomic blocks---indicated via the angled brackets $\langle e \rangle$ (\ruleref{AtomicBlockS})---to execute \lstinline{Expect} in the same step the flag is read.
In our examples atomic blocks include exactly one real, atomic, expression.
Consequently they erase to that real atomic expression.

This fuel-generation scheme, however, is unsound; see the \appendixorsupplement for a counterexample.
Indeed, it generates fuel out of thin air.
We follow \citet{Reinhard2021GhostSignalsTR} and force fuel-generation to be \emph{prepaid}.
In addition to the acyclicity requirement already introduced, executing \lstinline{Expect θ s δ} requires thread $\theta$ to have an \emph{expect permission} for $s$ and $\delta$.%
  \footnote{``Wait permissions'' in \citet{Reinhard2021GhostSignalsTR}.}
$\CreateExpectPerm[\theta]{s}{\degree'}{\degree}$ spawns this expect permission by consuming a call permission of degree $\delta'$ ($\delta < \delta'$).
If no call permission of degree $\delta'$ is available to thread $\theta$ the program gets stuck.
We write \lstinline{Expect s} if $\theta$ is the current thread and there is a single expect permission for signal $s$.
Like call permissions, expect permissions are inherited by the forkee at fork-time (\ruleref{ForkS}).
With the additional expect permission requirement for \lstinline{Expect}, \heaplanglt is sound i.e.\ it does not admit infinite fair executions.
\begin{theorem}[Absence of infinite fair executions]\label{thm:no-infinite-fair-executions}
  A \heaplanglt program of the form $e; \Finish$ does not have infinite fair executions.
\end{theorem}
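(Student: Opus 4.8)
\section*{Proof proposal}

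The plan is to argue by contradiction: assume a program $e;\Finish$ has an infinite fair execution $\rho$, and derive a contradiction from the two well-founded orders the semantics is parameterized by---the order on $\Degdom$ (lifted to the multiset order on a thread's multiset of call permissions) and the well-founded order on $\Levdom$. The whole point of the construction is that the only rule that manufactures fuel is \ruleref{ExpectS}, so the first milestone is to show that $\rho$ must contain infinitely many \Expect{} steps; the second, harder, milestone is to show that levels together with fairness make infinitely many \Expect{} steps impossible.

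For the first milestone I would isolate a \emph{fuel-free fragment} lemma: a single thread can take only finitely many consecutive steps without performing an \ruleref{ExpectS}. Indeed, every non-\Expect{} head step either structurally shrinks the thread's expression (the pure, heap, and remaining auxiliary reductions \ruleref{NewSignalS}, \ruleref{SetSignalS}, \ruleref{FinishS}, \ruleref{NewExpectPermS}, atomic blocks, etc.\ introduce no recursion) or strictly decreases its call-permission multiset in the multiset order: \ruleref{BetaS} removes a $\degree[0]$-permission, \ruleref{NewExpectPermS} removes a permission, and \ruleref{LowerS} replaces one permission by finitely many strictly smaller ones. Since a thread inherits call permissions exactly once, at birth (\ruleref{ForkS} copies the forker's stock to the forkee but never replenishes the forker), the only way a thread can take infinitely many steps is to execute \ruleref{ExpectS} infinitely often. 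In particular, because $\rho$ is infinite it contains infinitely many \Expect{} steps, and every thread scheduled infinitely often calls \Expect{} infinitely often. I would also record the two invariants that drive the rest: once a signal is set it stays set (there is no ``unset'' rule), so a signal unset at infinitely many points of $\rho$ is unset throughout the tail of $\rho$; and the obligation for such a signal is never discharged (only \ruleref{SetSignalS} removes it), so at every later configuration it is held by some thread, and a thread holding it can never execute \ruleref{FinishS}.

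The core is then a well-founded argument on levels. By \ruleref{ExpectS}, a call \Expect{}$\,\theta'\,s\,\degree$ requires $s$ to be unset and $\lev(s) \prec \istate.\stateObligations(\theta')$, i.e.\ strictly below every level the caller is obliged to set. Consider the levels of signals on which \Expect{} is performed infinitely often; I would show this set is nonempty and, since $\Levdom$ is well-founded, pick a minimal such level $\lev^*$ with a witnessing signal $s^*$. By the invariants, $s^*$ is unset throughout the tail of $\rho$ and its obligation is held forever by non-finishing threads. Here fairness is essential, and is used exactly as the informal slogan ``run out of fuel and get stuck'' suggests: under the paper's fairness notion a non-finishing thread must be scheduled infinitely often, hence (first milestone) calls \Expect{} infinitely often; but while it holds the obligation for $s^*$ at level $\lev^*$, each of its \Expect{} calls is at a level $\prec$ its obligations, and therefore strictly below $\lev^*$. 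This yields infinitely many \Expect{} steps at levels $< \lev^*$, contradicting the minimality of $\lev^*$.

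The step I expect to be the main obstacle---and where the bookkeeping lives---is making ``the obligation for $s^*$ is held forever by a thread we may treat as a single infinitely-scheduled thread'' precise in the presence of dynamic thread creation. The obligation may be handed along a chain of forkees (\ruleref{ForkS} transfers obligations to a fresh thread), so strictly speaking the holder is a \emph{sequence} of threads rather than one. I would split into the case where the holder is eventually constant (immediate) and the case where it changes infinitely often; in the latter, re-exposing an obligation-carrying \lstinline{fork} in each successive forkee costs a function application, so the chain performs infinitely many \ruleref{BetaS} and hence infinitely many \Expect{} steps, again all below $\lev^*$---unless the inherited fuel is exhausted, in which case some holder is stuck while still carrying an obligation, contradicting fairness. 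Two further points need care: establishing that the minimal witness $\lev^*, s^*$ actually exists when infinitely many \Expect{} steps could a priori be spread over infinitely many distinct signals and levels (a pigeonhole/K\"onig argument, using that only finitely many signals exist at any finite time); and pinning down the exact fairness definition so that ``stuck while obligated'' is genuinely excluded from fair executions. I would treat the fuel-free fragment lemma (complicated only by value-substituting \lstinline{let}-reductions and the multiset order) as routine.
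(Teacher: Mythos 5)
Your overall strategy---contradiction, the two well-founded orders, the set of signals \Expected{} infinitely often, a minimal level among them, and fairness forcing the obligation-holding thread to keep being scheduled---matches the paper's proof, and your ``fuel-free fragment'' lemma and permanence invariants (set signals stay set; undischarged obligations persist and block $\Finish$) are correct. But there is a genuine gap at exactly the step you defer to ``a pigeonhole/K\H{o}nig argument'': the inference from \emph{infinitely many $\Expect$ steps} to \emph{some single signal is \Expected{} infinitely often}. You need this twice---once to show your set $\Sinf$ is nonempty, and again at the very end, where ``infinitely many $\Expect$ steps at levels below $\lev^*$'' contradicts minimality of $\lev^*$ only if some one signal at a lower level is \Expected{} infinitely often. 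Pigeonhole on ``finitely many signals exist at any finite time'' cannot close this: over an infinite execution infinitely many signals may be created, and \ruleref{ExpectS} does \emph{not} consume the expect permission (a single permission is reusable unboundedly---this is precisely what fuels busy-waiting), so counting signals or permissions in any finite prefix bounds nothing about the future. A priori the execution could $\Expect$ each of infinitely many freshly created signals finitely often, financing each new expect permission with call permissions earned from earlier $\Expect$s; ruling this out is the actual content of the theorem and cannot be had by counting.

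What closes the gap in the paper is a strictly stronger finiteness lemma than your fuel-free fragment lemma: the \emph{path fuel} construction. One works in the program order graph (nodes are steps, with edges to the next step of the same thread and into forkees---your ``sequence of threads'' made precise), and assigns to any path that never $\Expect$s a signal in $\Sinf$ the multiset combining its call permissions with, for each expect permission $(s,\degree)$ it holds with $s \notin \Sinf$, one copy of $\degree$ per remaining step up to the \emph{global} index of the last $\Expect$ of $s$ in the execution (finite precisely because $s \notin \Sinf$). Every step leaves this measure unchanged or decreases it, function calls strictly decrease it, so by the Dershowitz--Manna order such a path is finite---even though it may continue to $\Expect$ non-$\Sinf$ signals, the case your lemma (which covers only $\Expect$-free runs) cannot touch. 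With this lemma both gaps close along the paper's lines: if $\Sinf = \emptyset$ every path is finite, contradicting K\H{o}nig's lemma applied to the infinite, finitely-branching program order graph; and for the endgame one argues not by violating minimality but directly: the path carrying $\smin$'s obligation only ever $\Expect$s at levels below $\lev(\smin)$, hence never $\Expect$s an $\Sinf$-signal, hence is finite---contradicting that it can never $\Finish$ and is scheduled fairly. Your fork-chain discussion shows the right instinct (paths, not threads, are the objects of the argument), but both of its branches ultimately lean on the same missing lemma.
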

\begin{proof}
See \appendixorsupplement.
Due to the setup's similarity, the proof is similar to that of \citet{DBLP:conf/cav/ReinhardJ20,Reinhard2021GhostSignalsTR}.
\end{proof}

\begin{corollary}[Safe termination]
  A safe \heaplanglt program of the form $e; \Finish$ reduces to a value under fair scheduling.
  \label{thm:safe-termination}
\end{corollary}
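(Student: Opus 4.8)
The plan is to derive Corollary~\ref{thm:safe-termination} directly from Theorem~\ref{thm:no-infinite-fair-executions} by a short argument that rules out the two remaining ways in which a program could fail to reduce to a value under fair scheduling.

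First I would fix the relevant notions precisely. A \emph{safe} program is one none of whose fairly-scheduled executions gets stuck: from every reachable configuration that is not fully reduced (i.e.\ where some thread's expression is not a value), a step is available for some scheduled thread. ``Reduces to a value under fair scheduling'' should mean that every maximal fair execution is finite and ends in a terminal configuration in which every thread has run to completion (each thread's expression is a value, and in particular the top-level $e;\Finish$ has executed its trailing $\Finish$). I would make explicit that, by the erasability/finish discipline, a terminal configuration of a well-formed $e;\Finish$ program is exactly one in which the main thread and every forked thread have reached a value.

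Next I would argue by contradiction. Suppose a safe program of the form $e;\Finish$ does \emph{not} reduce to a value under fair scheduling. Then there is a maximal fair execution that is not a finite run ending in a terminal value configuration. There are only two cases. Either (i) the execution is finite but ends in a non-terminal configuration, meaning some scheduled thread can take no step while its expression is not a value; this is precisely a stuck configuration, contradicting safety. Or (ii) the execution is infinite. But an infinite fair execution is exactly what Theorem~\ref{thm:no-infinite-fair-executions} forbids for programs of the form $e;\Finish$, again a contradiction. Since both cases are impossible, every maximal fair execution is finite and terminal, i.e.\ the program reduces to a value.

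The genuinely load-bearing step is case~(i): I must ensure that ``stuck'' (the negation of safety) coincides with ``finite fair execution ending in a non-value, non-terminal configuration,'' so that the two cases above are jointly exhaustive. Concretely, I would verify that a fair execution cannot stall merely because the scheduler starves an enabled thread---fairness guarantees every continuously-enabled thread is eventually scheduled---so the only way a maximal fair run terminates without all threads reaching values is that some reachable configuration admits no step at all, which is what safety excludes. The main obstacle is thus not a deep argument but pinning down the definitions of safety, fairness, and ``reduces to a value'' so that Theorem~\ref{thm:no-infinite-fair-executions} slots in cleanly; once ``unsafe'' and ``infinite'' are shown to cover all non-terminating behaviours, the corollary follows immediately.
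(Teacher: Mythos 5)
Your proposal is correct and follows essentially the same route as the paper's own proof: the trichotomy value/stuck/infinite, with the infinite case eliminated by Theorem~\ref{thm:no-infinite-fair-executions} and the stuck case eliminated by the definition of safety. The paper states this directly in two sentences rather than by contradiction, but the content is identical; your extra care in pinning down the definitions of safety, fairness, and terminal configurations is just making explicit what the paper leaves implicit.
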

\begin{proof}
  By Theorem~\ref{thm:no-infinite-fair-executions} we know that, under an arbitrary fair scheduler, \heaplanglt programs must either reduce to a value or get stuck.
  A safe \heaplanglt program cannot get stuck by definition.
\end{proof}

\section{Modular programming in \heaplanglt}%
\label{sec:motivating-example-revisited}
\label{sec:modular-programming-in-heaplanglt}

\begin{figure}
  \centering
  \begin{tabular}{c}
    \lstinputlisting{code/spinlock/case-study/folded/setup.hllt}
  \end{tabular}

  \begin{tabular}{r||l||l}
    \hspace{-.5cm}
    \lstinputlisting{code/spinlock/case-study/folded/a.hllt} &
    \lstinputlisting{code/spinlock/case-study/folded/b.hllt} &
    \lstinputlisting{code/spinlock/case-study/folded/c.hllt} \\
  \end{tabular}
  \caption[Motivating example in \heaplanglt]{Safe \heaplanglt program that erases to the motivating example of Figure~\ref{fig:motivating-example}, using the \spinlock module of Fig.~\ref{fig:spinlock-classic-sassy}~(b).
    We elide the straightforward lowering of call permissions to $\degree[0]$ to justify function calls.}%
  \label{fig:motivating-example-folded}
\end{figure}

This section tackles consider the problem of modular programming in \heaplanglt: given a \heaplang program consisting of multiple modules, such as the program of Fig.~\ref{fig:motivating-example} built on top of a \spinlock module, how can we instrument it to obtain a safe \heaplanglt program, without breaking modularity?

The main challenge lies in instrumenting modules that offer operations whose termination depends on liveness properties of the client, such as lock modules where termination of an invocation of $\mathsf{acquire}$ depends on the client eventually calling $\mathsf{release}$ after each conflicting invocation of $\mathsf{acquire}$.

Two approaches may generally be adopted in this case, as illustrated in Fig.~\ref{fig:spinlock-classic-sassy}. One approach, illustrated in Fig.~\ref{fig:spinlock-classic-sassy}~(a), is for the module to burden the calling thread with sufficient obligations such that sufficient \lstinline{Expect} commands can be inserted into the module to fuel its busy-waiting loops. This yields the type of contract between module and client that has been enforced by most existing obligations-based modular verification approaches for blocking programs~\cite{DBLP:conf/concur/Kobayashi06,DBLP:conf/esop/LeinoMS10,DBLP:conf/ecoop/BostromM15,DBLP:journals/toplas/JacobsBK18,DBLP:conf/cav/ReinhardJ20,Reinhard2021GhostSignalsTR}. However, it imposes a particular usage pattern and generally rules out some subset of legitimate clients, such as the client of Fig.~\ref{fig:motivating-example} and the modular construction of \cohortlock{}s (Section~\ref{sec:cohortlock}).

Therefore, in this paper, we propose a more flexible and general approach, illustrated in Fig.~\ref{fig:spinlock-classic-sassy}~(b), where the job of generating fuel for the busy-waiting loops whose termination depends on the client is relegated to the client. Specifically, each operation whose termination depends on the client eventually performing some action takes, as an auxiliary argument $\abeta$, a piece of auxiliary code that must produce call permissions whenever it is invoked before the client has performed the action. In the case of $\mathsf{acquire}$, for example, $\abeta$ must produce a call permission whenever it is invoked while the client has not invoked a $\mathsf{release}$ operation after a conflicting $\mathsf{acquire}$ invocation.

This approach correspondingly also leaves it up to the client to create whatever obligations, in whatever threads, it needs to be able to do so. For this purpose, each module operation also takes, as an auxiliary argument $\aalpha$, a piece of auxiliary code that is invoked at the operation's linearization point \cite{DBLP:journals/toplas/HerlihyW90}.%
  \footnote{This revives \cite{DBLP:conf/popl/JacobsP11} which, for \emph{partial} correctness verification, has been superseded by \emph{viewshift}-based approaches (e.g.~\citet{DBLP:conf/popl/JungSSSTBD15}).}

Figure~\ref{fig:motivating-example-folded} shows how, given this parameterization of the \spinlock module, the client program of Figure~\ref{fig:motivating-example} can be instrumented.

First, the client sets up two auxiliary locations \lstinline{l_2} and \lstinline{l_3} that will store the signals whose obligations will be held by threads~2 (middle) and~3 (right), respectively.
Because creating an expect permission consumes a call permission, threads~1 and~3's $\abeta$ argument to $\mathsf{acquire}$ only creates one the first time it is invoked.
Thread~3 uses the $\aalpha$ argument to $\mathsf{acquire}$ to burden itself with the obligation to release upon acquiring the lock. 
In contrast, thread~1 burdens thread~2 instead, allowing it to finish without releasing the lock (which will happen in thread~2 instead).

\emph{Instantiating the degrees.}
The example client program is parameterized by the degree $\delta_3$ of the call permission expected by the lock module's $\mathsf{acquire}$ implementation, both when it is invoked and after it invokes its $\abeta$ argument, as well as by degrees $\delta_1$ and $\delta_2$ such that $\delta_3 < \delta_2 < \delta_1$.
In the \appendixorsupplement we discuss how to pick these degrees modularly.

\section{Background: partial correctness of possibly diverging programs}%
\label{sec:lat}

In this paper, we propose a two-step approach to modularly verify termination of busy-waiting programs: first, modularly instrument the program with auxiliary constructs from a terminating language such that safety of the instrumented program implies termination under fair scheduling of the original program; second, modularly verify safety of the instrumented program using any existing logic for partial correctness verification of higher-order fine-grained concurrent programs.

\begin{figure}
\begin{subfigure}[t]{.3\textwidth}
\begin{tabular}{c}
\begin{lstlisting}
let x = ref 1 in
let y = ref 1 in
fork (1 / !y; y := 0);
1 / !x; x := 0
$$
$$
\end{lstlisting}
\end{tabular}
\caption{The threads work with disjoint state.}%
\label{fig:safety-seplogic-independent}
\end{subfigure}%
\hfill
\begin{subfigure}[t]{.3\textwidth}
\begin{tabular}{c}
\begin{lstlisting}
let x = ref 1 in
fork (
  while not CAS x 0 1
  do () done;
  1 / !x; x := 0);
1 / !x; x := 0
\end{lstlisting} 
\end{tabular}
\caption{Location $x$ is concurrently accessed by both threads.
  The forkee's division by $x$'s content is guarded by the busy-waiting \lstinline{CAS}-loop}%
\label{fig:safety-seplogic-inlined}
\end{subfigure}%
\hfill
\begin{subfigure}[t]{.3\textwidth}
\begin{tabular}{c}
\begin{lstlisting}
let x = ref 1 in
fork (
  acquire x;
  1 / !x; release x);
1 / !x; release x
$$
\end{lstlisting}
\end{tabular}
\caption{Forker and forkee synchronize the division by $x$'s content using a \spinlock.}%
\label{fig:safety-seplogic-spinlock}
\end{subfigure}
\caption{Safe programs dividing by the value of a reference.}%
\label{fig:safety-seplogic}
\end{figure}

We show how to do the latter in \iris~
\cite{DBLP:conf/popl/JungSSSTBD15,DBLP:conf/icfp/0002KBD16,DBLP:journals/jfp/JungKJBBD18,DBLP:conf/esop/Krebbers0BJDB17,DBLP:journals/pacmpl/SpiesGTJKBD22}, a state-of-the-art logic in this category.
First, this section recalls key notion of \iris to verify \emph{partial} correctness of uninstrumented fine-grained concurrent programs in \iris.
Beginning with monolithic programs, this section builds up to \citet{DBLP:conf/popl/JungSSSTBD15}'s \emph{logically atomic triples} specification formalism for fine-grained concurrent modules.
The next section then shows how to adapt this approach to verify instrumented programs following the programming pattern of Section~\ref{sec:modular-programming-in-heaplanglt}, yielding \emph{total correctness logically atomic triples}.

\subsection{Separation logic}

Consider the three concurrent programs in Figure~\ref{fig:safety-seplogic}.
We will use \iris to prove safety/partial correctness for them; in particular, that they do not divide by zero.
\iris is a separation logic \cite{reynolds2000intuitionistic,DBLP:conf/csl/OHearnRY01,DBLP:journals/cacm/OHearn19}, a Hoare-style~\cite{DBLP:journals/cacm/Hoare69} logic where assertions assert \emph{ownership} of \emph{resources}, in addition to asserting the truth of logical propositions.

\begin{figure}
\begin{subfigure}{.3\textwidth}
\begin{tabular}{c}
\begin{lstlisting}
$\{ \TRUE \}$
let x = ref 1 in
$\{ x \mapsto 1 \}$
let y = ref 1 in
$\{ x \mapsto 1 \ast y \mapsto 1 \}$
fork (
  $\{ y \mapsto 1 \}$
  1 / !y;
  $\{ y \mapsto 1 \}$
  y := 0
  $\{ y \mapsto 0 \}$
);
$\{ x \mapsto 1\}$
1 / !x;
$\{ x \mapsto 1\}$
x := 0
$\{ x \mapsto 0\}$
\end{lstlisting}
\end{tabular}
\caption{Local reasoning about disjoint state.}%
\label{fig:safety-proof-outline-independent}
\end{subfigure}
\begin{subfigure}{.65\textwidth}
\begin{mathpar}
  \inferH{hoare-conseq}{P \Rrightarrow P' \\ \hoare{P'}{e}{Q'} \\ Q' \Rrightarrow Q}{\hoare{P}{e}{Q}}

  \axiomH{points-to-exclusivity}{\ell \mapsto \_ \ast \ell \mapsto \_ \implies \FALSE}

  \axiomH{hoare-alloc}{\hoare{\TRUE}{\Alloc\spac v}{\ell.\, \ell \mapsto v}}

  \axiomH{hoare-load'}{\{ \ell \mapsto v \}~\deref\spac\ell~\{w.\, v = w \ast \ell \mapsto v \}}

  \axiomH{hoare-store'}{\{ \ell \mapsto v \}~\ell\spac\coloneq\spac w~\{ \ell \mapsto w \}}

  \inferH{hoare-fork}{\{ P \}~e~\{ \TRUE \}}{\{ P \}~\langkw{fork}\spac e~\{ \TRUE \}}

  \inferH{hoare-frame}{\{ P \}~\expr~\{ v.\, Q \}}{\{ P \ast R \}~\expr~\{ v.\, Q \ast R\}}

  \axiomH{hoare-def'}{\hoare{P}{\expr}{Q} \eqdef \square (P \wand \wpre{\expr}{Q})}

\end{mathpar}
\caption{Selected reasoning rules.}%
\label{fig:seplogic-basic-rules}
\end{subfigure}
\caption{Separation logic enables local reasoning. The persistence modality $\square P$ expresses that $P$ does not assert ownership. }
\end{figure}

Figure~\ref{fig:safety-proof-outline-independent} shows an \iris proof outline for the program in Figure~\ref{fig:safety-seplogic-independent}, constructed using the proof rules shown in Figure~\ref{fig:seplogic-basic-rules}. The \emph{points-to assertion} $\ell \mapsto v$ asserts exclusive ownership of the memory cell at location $\ell$ and that it currently stores the value $v$; the \emph{separating conjunction} $P * Q$ asserts ownership of \emph{separate} resources $r_1$ and $r_2$ such that $P$ holds for $r_1$ and $Q$ holds for $r_2$. The proof of the $\langkw{fork}$ command uses \ruleref{hoare-frame} to \emph{frame off} $x \mapsto 1$ and \ruleref{hoare-conseq}, where, as we will see, the \emph{viewshift} operator $\Rrightarrow$ generalizes implication.

\iris' Hoare triples are not primitive; rather, they are encoded in terms of its weakest precondition $\textlog{wp}$~\cite{DBLP:journals/cacm/Dijkstra75}, separating implication (also known as \emph{magic wand}) $\wand$~\cite{DBLP:conf/popl/IshtiaqO01} and persistence modality $\square$~\cite{DBLP:conf/popl/JungSSSTBD15,DBLP:journals/jfp/JungKJBBD18}; see \ruleref{hoare-def'}. $\wpre{e}{Q}$ asserts ownership of sufficient resources to ensure expression $e$ executes safely and satisfies postcondition $Q$; $P \wand Q$ asserts ownership of some resources that, composed with any \emph{separate} resources for which $P$ holds, satisfy $Q$: $P * (P \wand Q) \proves Q$; $\square P$ asserts that $P$ holds \emph{persistently}: $\square P \proves \square P * \square P$.

\subsection{Sharing resources}\label{sec:invariants}

\begin{figure}
\raggedright
\begin{subfigure}{.55\textwidth}
\begin{tabular}{c}
\begin{lstlisting}
$\{ \TRUE \}$
let x = ref 1 in
$\{ x \mapsto 1 \}$
$\{ x \mapsto 1 \ast R \}~\textrm{where}~R \ast R \implies \FALSE$
$ \{ R \ast \knowInv{\namesp}{I} \}~\textrm{where}~I \eqdef \exists n.\, x \mapsto n \ast (n = 1 \lor (n = 0 \ast R))$
fork (
  $\{ \knowInv{\namesp}{I} \}$
  while not
    $\{ (x \mapsto 1) \lor (x \mapsto 0 \ast R) \}$
    CAS x 0 1
    $\{b.\, (b = \mytrue \ast x \mapsto 1 \ast R) \lor (b = \myfalse \ast x \mapsto 1) \}$
  do () done;
  $\{ R \ast \knowInv{\namesp}{I} \}$
  1 / !x; x := 0 (* $\cursivecomment{see main thread}$ *)
);
$\{ R \ast \knowInv{\namesp}{I} \}$
let v =
  $\{ R \ast x \mapsto 1 \}$
  !x
  $\{w.\, w = 1 \ast R \ast x \mapsto 1 \}$
in
$\{ v = 1 \ast R \ast \knowInv{\namesp}{I} \}$
1 / v;
$\{ R \ast \knowInv{\namesp}{I} \}$
  $\{ R \ast x \mapsto 1 \}$
  x := 0
  $\{ R \ast x \mapsto 0 \}$
$\{ \knowInv{\namesp}{I} \}$
\end{lstlisting}
\end{tabular}
\caption{Concurrent reads of, and writes to, \lstinline{x}.}%
\label{fig:safety-proof-inlined}
\end{subfigure}%
\hfill
\begin{subfigure}{.44\textwidth}
\begin{mathpar}
  \axiomH{hoare-def}{\hoare{P}{\expr}{Q}[\mask] \eqdef \square (P \wand \wpre{\expr}[\mask]{Q})}

  \inferH{hoare-csq}
    { \prop \vs[\mask] \prop' \\
      \hoare{\prop'}{\expr}{\propB'}[\mask] \\
      \propB' \vs[\mask] \propB}
    {\hoare{\prop}{\expr}{\propB}[\mask]}

  \axiomH{inv-alloc'}{\prop \proves \pvs \knowInv{\namesp}{\prop}}

  \inferH{hoare-inv}
    { 
      \knowInv{\namesp}{I}\\
      \atomic(\expr) \\
      \namesp \subseteq \mask \\
      \hoare{\later I \ast P}{\expr}{\later P \ast Q}[\mask\setminus\namesp]
    }
    {\hoare{P}{\expr}{Q}[\mask]}

  \axiomH{hoare-load}{\{ \ell \mapsto v \}~\deref\spac\ell~\{w.\, v = w \ast \ell \mapsto v \}_{\emptyset}}

  \axiomH{hoare-store}{\{ \ell \mapsto v \}~\ell\spac\coloneq\spac w~\{ \ell \mapsto w \}_{\emptyset}}

  \axiomH{hoare-cas-succ}{\hoare{\ell \mapsto v}{\CAS\spac\ell\spac v\spac w}{b.\, b = \mytrue \ast \ell \mapsto w}[\emptyset]}

  \inferH{hoare-cas-fail}{v \neq v'}{\hoare{\ell \mapsto v}{\CAS\spac\ell\spac v'\spac w}{b.\, b = \myfalse \ast \ell \mapsto v}[\emptyset]}

\end{mathpar}
\caption{Selected \iris rules~\cite{DBLP:journals/jfp/JungKJBBD18,iris-technical-reference}.}%
\label{fig:selected-concurrency-rules}
\end{subfigure}
\caption{The location \lstinline{x} is shared between the threads.
  The load, store, and compare-and-swap commands are atomic.
  }
\end{figure}

Figure~\ref{fig:safety-proof-inlined} shows a proof outline for the program of Figure~\ref{fig:safety-seplogic-inlined}, using the proof rules shown in Figure~\ref{fig:selected-concurrency-rules}. In this program, both threads read from and write to the same location $x$. To support the sharing of resources among threads, \iris has \emph{invariants}. In Iris, all resources are owned either by threads or by invariants. Creating an invariant $\knowInv{\namesp}{P}$ in a \emph{namespace} $\namesp$ transfers ownership of some resources satisfying $P$ from the current thread to the invariant: $P \Rrightarrow \knowInv{\namesp}{P}$.\footnote{The \emph{viewshift} $P \Rrightarrow Q$ states that resources satisfying $P$ can be transformed into resources satisfying $Q$ by updating \emph{ghost state}. \iris's invariants are constructed from (higher-order) ghost state.} The knowledge of the existence of an invariant is persistent: $\knowInv{\namesp}{P} \vdash \knowInv{\namesp}{P} * \knowInv{\namesp}{P}$, and can therefore be shared freely among threads.

A thread can \emph{open} an invariant; this temporarily transfers ownership of the invariant's resources to the thread. However, a thread must \emph{close} all invariants it opened before the end of the current step of execution. Also, it must not open an invariant that is already open. Both constraints are enforced by tracking the current \emph{mask}, the set of namespaces\footnote{Actually, namespaces and masks are both sets of \emph{invariant names}; if the current mask is $\mask$, an invariant in namespace $\namesp$ can be opened if $\namesp \subseteq \mask$.} that can currently be opened. Specifically, the mask $\mask$ in $\hoare{P}{e}{Q}[\mask]$ means that, if $e$ takes at most one step, invariants in mask $\mask$ may be opened to prove the triple; see \ruleref{hoare-inv}.\footnote{In order to be able to assign meaning to higher-order ghost state such as invariants, \iris is \emph{step-indexed}: \iris propositions are predicates over a \emph{resource} and a \emph{step index} $n$, i.e.~the number of steps left in the partial execution trace under consideration. $\later P$ asserts that $P$ holds \emph{later}, i.e.~after another step of execution. For points-to assertions and other first-order assertions $P$, which are \emph{timeless}, the modality can be ignored: $\later P \Rrightarrow P$.} If a mask is omitted, the default mask $\top$, which includes all namespaces, is assumed.

Central to the proof outline in Figure~\ref{fig:safety-proof-inlined} is the purely ghost, exclusive, resource $R$%
  \footnote{For example $\ownGhost{\gname}{\exinj ()}$~\cite[\S~4.6]{iris-technical-reference}. Specifically, we have $\TRUE \vs \exists \gamma.\,\ownGhost{\gname}{\exinj ()}$.}, initially owned by the main thread and, through the invariant, providing exclusive permission to set $x$ from 1 to 0.

\emph{Fancy updates.}
In fact, rule \ruleref{hoare-inv} is a special case of rule \ruleref{hoare-atomic} (Figure~\ref{fig:inv-detailed-rules}). The \emph{fancy update} $\pvs[\mask_1][\mask_2] P$ \cite[\S~7.2]{DBLP:journals/jfp/JungKJBBD18} asserts ownership of sufficient resources to enable changing the current mask from $\mask_1$ to $\mask_2$ (which involves closing the invariants in $\mask_2 \setminus \mask_1$ and opening the invariants in $\mask_1 \setminus \mask_2$) and otherwise updating ghost state to obtain resources satisfying $P$. In particular, if $\namesp \subseteq \mask$, we have $\knowInv{\namesp}{I} \proves \pvs[\mask][\mask\setminus\namesp] \later I * (\later I \vsW[\mask\setminus\namesp][\mask] \TRUE)$ (\textsc{inv-open}), where \emph{linear viewshift} $P \vsW[\mask_1][\mask_2] Q$ is a shorthand for $P \wand \pvs[\mask_1][\mask_2] Q$.

\subsection{Fine-grained concurrent functions} %

Figure~\ref{fig:safety-proof-deref} shows a proof outline for the program from Figure~\ref{fig:safety-seplogic-inlined} with $\deref x$ replaced by $\texttt{deref}\spac x$, defined as \lstinline{let deref l = !l}.
Unlike $\deref\ell$, $\texttt{deref}\spac\ell$ is not atomic since the function application itself takes a step too.
The proof of $\texttt{deref}$ uses \ruleref{hoare-atomic}; the proof of the call site uses \textsc{inv-open}.

\begin{figure}
\begin{subfigure}{.5\textwidth}
\begin{tabular}{c}
\begin{lstlisting}
let deref = λ l.
$\{ \pvs[\top][\emptyset] \exists w.\, l \mapsto w \ast (l \mapsto w \vsW[\emptyset][\top] \Phi(w)) \}$
  $\{ \exists w.\, l \mapsto w \ast (l \mapsto w \vsW[\emptyset][\top] \Phi(w)) \}$
  !l
  $\{v.\, v = w \ast l \mapsto w \ast (l \mapsto w \vsW[\emptyset][\top] \Phi(w) \}$
  $\{v.\, \pvs[\emptyset][\top] \Phi(v) \}$
$\{ \Phi(v) \}$
in
let x = ref 1 in
$ \{ R \ast \knowInv{\namesp}{I} \}~\textrm{where}~I \eqdef \exists n.\, x \mapsto n \ast (n = 1 \lor (n = 0 \ast R))$
fork (
  $\{ \knowInv{\namesp}{I} \}$
  while not CAS x 0 1 do () done;
  $\{ R \ast \knowInv{\namesp}{I} \}$
  1 / deref x; x := 0 (* $\cursivecomment{See main thread}$ *)
)
let v =
$\{ R \ast \knowInv{\namesp}{I} \}$
$\{ \pvs[\top][\emptyset] x \mapsto 1 \ast (x \mapsto 1 \vsW[\emptyset][\top] \Phi(1)) \}~\textrm{where}~\Phi(u) \eqdef R \ast u = 1$
  deref x
$\{ w.\, \Phi(w) \}$
in
$\{ v = 1 \ast R \ast \knowInv{\namesp}{I} \}$
1 / v; x := 0 (* $\cursivecomment{See Figure~\ref{fig:safety-proof-inlined}}$ *)
\end{lstlisting}
\end{tabular}
\caption{Proof outline for the program in Figure~\ref{fig:safety-seplogic-inlined} with the atomic $\deref$ replaced by $\texttt{deref}$.}%
\label{fig:safety-proof-deref}
\end{subfigure}%
\hfill
\begin{subfigure}{.4\textwidth}
\begin{mathpar}

  \axiomH{inv-alloc}{\later\prop \proves \pvs[\emptyset] \knowInv{\namesp}{\prop}}

  \inferH{inv-open}
    {\namesp \subseteq \mask}
    {\knowInv{\namesp}{\prop}\!\!\! \vs[\mask][\mask\setminus\namesp] \later\prop * (\later\prop \vsW[\mask\setminus\namesp][\mask] \TRUE)}

  \axiom{A \vsW[\mask_1][\mask_2] B \eqdef A \wand \pvs[\mask_1][\mask_2]}

  \axiom{A \vs[\mask_1][\mask_2] B \eqdef \square(A \vsW[\mask_1][\mask_2] B)}

  \axiom{A \vs[\mask] B \eqdef A \vs[\mask][\mask] B}

  \inferH{hoare-atomic}
  { \atomic(\expr) \\ \hoare{P}{e}{\pvs[\mask_2][\mask_1] Q}[\mask_2] }
  { \hoare{\pvs[\mask_1][\mask_2] P}{e}{Q}[\mask_1] }

  \axiom
    {\later (P \ast Q) \provesIff \later P \ast \later Q}

  \infer
    {\tau~\text{is inhabited}}
    {\later (\exists x : \tau.\, P) \provesIff \exists x : \tau.\, \later P}

  \axiom
    {\timeless{\laterCredit{n}}}

  \axiom
    {\later P \ast \laterCredit{1} \proves \pvs P}
\end{mathpar}
\caption{Selected proof rules. }%
\label{fig:inv-detailed-rules}
\end{subfigure}
\caption{Specifying fine-grained concurrent functions}
\end{figure}

\emph{Verifying \spinlock}
\Spinlock release can be specified analogously:\\
  \( \hoare{\pvs[\top][\emptyset] \lockvar \mapsto \mytrue \ast (\lockvar \mapsto \myfalse \vsW[\emptyset][\top] \Phi)}{\texttt{release}\spac\lockvar}{\Phi} \).
Specifying \spinlock acquire, however, is slightly trickier: we need a mechanism to allow the function to retry (busy-wait) while it is blocked (marked in green):\footnote{$\atomicupdate$ stands for ``atomic update''.}
\(\setlength\fboxsep{1pt} \hoare{\smash{\colorbox{greenbg}{$\atomicupdate$}}}{\texttt{acquire}\spac\lockvar}{\Phi} \)
where\\
\( \setlength\fboxsep{1pt} \colorbox{greenbg}{$\atomicupdate$} \eqdef \mathsf{gfp}(X.\;\pvs[\top][\emptyset] \exists b.\, \lockvar \mapsto b \ast \bigl( (\colorbox{greenbg}{$\lockvar \mapsto b \vsW[\emptyset][\top] X$}) \land ({b = \myfalse \wand \lockvar \mapsto \mytrue \vsW[\emptyset][\top] \Phi}) \bigr)) \)
and $\mathsf{gfp}(X.\,F(X))$ is the greatest fixpoint of $F$, i.e.~$\mathsf{gfp}(X.\,F(X)) \eqdef \bigvee P.\;(P \vdash F(P))$.

\subsection{Logically atomic triples}%

\begin{figure}
  \begin{subcaptionbox}
    {Encoding of logically atomic triples. Presentation adapted from~\citet{DBLP:journals/pacmpl/MulderK23}.
      \label{fig:lat-def}}
    {\input{./figures/lat/lat.tex}}
  \end{subcaptionbox}

  \begin{subcaptionbox}
    {Proof outline for client of concrete \spinlock.\label{fig:safety-proof-concrete-spinlock}}[.35\textwidth]
    {\lstinputlisting{code/safety-concrete-spinlock/a.hllt}}
  \end{subcaptionbox}
  \begin{subcaptionbox}{Client invariant and atomic updates. Because the \spinlock implementation in Figure~\ref{fig:spinlock-classic-sassy} uses booleans we implicitly add the constraint $n \in \set{0,1}$.\label{fig:safety-proof-concrete-spinlock-defs}}{
  \def\gets{\coloneq}
  \(\begin{aligned}
    & \langle b.\, \lockvar \mapsto b \rangle~ \texttt{acquire}\spac\lockvar ~\langle \lockvar \mapsto \mytrue \ast b = \myfalse \rangle_{\emptyset} \\
    & \langle \lockvar \mapsto \mytrue \rangle~ \texttt{release}\spac\lockvar ~\langle \lockvar \mapsto \myfalse \rangle_{\emptyset} \\ 
    & R \eqdef{} \ownGhost{\gname}{\exinj()}, I \eqdef \exists n.\, x \mapsto n \ast (n = 1 \lor (n = 0 \ast R)) \\
    & \knowInv{\namesp}{I} \proves{} \langle n.\, x \mapsto n \mid x \mapsto 1 \ast n = 0 \vs R \rangle_{\emptyset} \\
    & \knowInv{\namesp}{I} \ast R \proves{} \langle x \mapsto 1 \mid x \mapsto 0 \vs \TRUE \rangle_{\emptyset} \\
    & \knowInv{\namesp}{I} \proves{} \hoare{R}{x \gets 0}{\TRUE}  \\
    & \knowInv{\namesp}{I} \proves{} \hoare{R}{\deref x}{v.\, v = 1 \ast R}
    \end{aligned}\)%
  }
  \end{subcaptionbox}
  \caption{\citet{DBLP:conf/popl/JungSSSTBD15}'s logically atomic triples.}%
  \label{fig:lat-and-use-case}
\end{figure}

The specification derived for \spinlock's \lstinline{acquire} is a special case of \citet{DBLP:conf/popl/JungSSSTBD15}'s \emph{logically atomic triples}.%
  \footnote{\citet{DBLP:journals/pacmpl/BirkedalDGJST21} have shown that logical atomicity implies linearizability~\cite{DBLP:journals/toplas/HerlihyW90}.}
Using their notation%
  \footnote{Inspired by \citet{DBLP:conf/ecoop/PintoDG14}'s atomic triples.}
  we specify \lstinline{acquire} in terms of its \emph{atomic pre- and postcondition}:
  \(\langle b.\, \lockvar \mapsto b \rangle~\texttt{acquire}\spac\lockvar~\langle \lockvar \mapsto \mytrue \ast b = \myfalse \rangle_{\emptyset} \).
Figure~\ref{fig:lat-and-use-case} shows the general encoding of logically atomic triples (\ref{fig:lat-def}) and their application to the program of Figure~\ref{fig:safety-seplogic-spinlock} (\ref{fig:safety-proof-concrete-spinlock}, \subref{fig:safety-proof-concrete-spinlock-defs}).

\subsection{Data abstraction}
This usage of the lock exposes the underlying boolean reference all but dictating the implementation as \spinlock.
We achieve data abstraction by letting the lock module define a lock predicate $L(\lockvar,b,\gname)$ of three arguments, whose definition is opaque to the client.
To lessen the proof burden for the client, in particular because $L$ may not be timeless, only $\later L(\ldots)$ is required in the atomic precondition.
\begin{align*}
  \hoare{\TRUE}{ &\texttt{create}\spac\TT }{\lockvar.\, \exists \gname.\, L(\lockvar,\myfalse,\gname)} \\
  \langle b.\, \later L(\lockvar,b,\gname) \rangle~ &\texttt{acquire}\spac\lockvar ~\langle L(\lockvar,\mytrue,\gname) \ast b = \myfalse \rangle_{\mask} \\
  \langle \later L(\lockvar,\mytrue,\gname) \rangle~ &\texttt{release}\spac\lockvar ~\langle L(\lockvar,\myfalse,\gname) \rangle_{\mask}
\end{align*}
Now, $\lockvar$ is an opaque reference and $b$ represents the \emph{abstract} lock state, $\myfalse$ for unlocked and $\mytrue$ for locked.
The third argument $\gname$ is new and can be used by the module to keep track of internal ghost state across function calls.
Because of its irrelevance to the client we usually omit it.
Finally, $\mask$ is the module mask reserved for operations (module internal invariants) on the lock.

\section{Total correctness logically atomic triples for unfair modules}%
\label{sec:tclat-unfair}

Equipped with this background on verifying partial correctness of uninstrumented programs we now turn towards instrumented ones.
In particular, we consider programs following the higher-order programming pattern of Section~\ref{sec:modular-programming-in-heaplanglt}.
Clients of blocking modules pass auxiliary code wrapped in auxiliary functions to blocking modules to abstract over the client-specific manipulation of auxiliary state.
Previously we reasoned informally about $\abeta$---required to produce fuel when the module is blocked on the client---and $\aalpha$, allowing the client to change auxiliary state the moment the blocking operation takes effect.
In this section we concretize these intuitions with precise contracts.

\subsection{\Spinlock-specific total correctness specification}

We want a modular yet expressive specification for the instrumented \spinlock of Figure~\ref{fig:sassy-spinlock}.
Recall that we use atomic blocks to bundle the execution of $\abeta$ and $\aalpha$ with real, atomic, expressions.
For example executing $\aalpha$ the moment the \spinlock is released \lstinline{/*⟨κ ();*/ lk := false/*⟩*/} (Figure~\ref{fig:sassy-spinlock}).
If the sequential expression $e$ is executed in a big-step---i.e.\ as part of an atomic block---$\bswpre{\expr}[\mask]{\Phi}$ describes the weakest precondition of $e$ such that $\Phi$ holds afterwards.

The logically atomic triples of Figure~\ref{fig:safety-proof-concrete-spinlock-defs} do not work for two reasons.
First, $\vscommit$ needs to say that it is safe to execute $\aalpha$, that $\aalpha$ terminates, and that after running $\aalpha$ the invariants can be closed again.
Therefore we adapt \vscommit as follows.
\begin{align*}
  b = \myfalse \wand{} &\lockvar \mapsto \mytrue \vsW[\emptyset][\top] \Phi & \text{partial correctness}\\
  b = \myfalse \wand \textlog{wp}^{\Downarrow}_{\emptyset}\spac {\aalpha} \spac \{&\lockvar \mapsto \mytrue \vsW[\emptyset][\top] \Phi \} & \text{total correctness}
\end{align*}

Second, $\vsabort$ needs to say that if the lock is already held, it is safe for the module to run $\abeta$, $\abeta$ terminates, and after running $\abeta$ the invariants can be closed again.
Separately, upon return, $\abeta$ must guarantee that a call permission to fuel the busy-waiting loop is available.
Therefore, we adapt $\vsabort$ as follows.
\begin{align*}
  & \lockvar \mapsto b \vsW[\emptyset][\top] \atomicupdate & \text{partial correctness} \\
  b = \mytrue \wand \textlog{wp}^{\Downarrow}_{\emptyset}\spac {\abeta}\spac \{\callp{\degree[0]} \ast{} (&\lockvar \mapsto b \vsW[\emptyset][\top] \tatomicupdate) \} & \text{total correctness}
\end{align*}
The important addition is the call permission in the postcondition.%
Note that \lstinline{acquire} consists of a simple busy-waiting \lstinline{CAS}-loop so that a call permission of minimal degree $\degree[0]$ per iteration suffices.
We use $\callp{\degree}[\theta]$ to represent ownership of a call permission of degree $\degree$ associated with thread $\theta$.
If $\theta$ is the current thread we omit it.

A \spinlock-specific total correctness specification thus looks as follows.
\begin{align*}
  & \langle b \eventually \myfalse.\, \lockvar \mapsto b\rangle~ \texttt{acquire}\spac\lockvar~\langle \lockvar \mapsto \mytrue \ast b = \myfalse \rangle_{\emptyset} \eqdef
  \forall \Phi, \abeta, \aalpha.\, \tatomicupdate \wand \wpre {\texttt{acquire}\spac\lockvar\spac\abeta\spac\aalpha} {\Phi}  \\&
  \text{where~} \tatomicupdate \eqdef \mathsf{gfp}(X.\, \pvs[\top][\emptyset] \exists b.\, \lockvar \mapsto b \ast
    \left( \begin{array}{l}
      (b = \mytrue \wand \bswpre{\abeta} [\emptyset] {\callp{\degree[0]} \ast (\lockvar \mapsto b \vsW[\emptyset][\top] X)}) \land {}\\
      (b = \myfalse \wand \bswpre {\aalpha} [\emptyset] {\lockvar \mapsto \mytrue \vsW[\emptyset][\top] \Phi})
    \end{array}\right))
\end{align*}
The eventually arrow%
  \footnote{This notation is inspired by \tadalive~\cite{DBLP:journals/toplas/DOsualdoSFG21} where it describes a constraint on the environment that always eventually the atomic precondition is satisfied.}
  $\eventually$ expresses the liveness requirement imposed on the client.
If unsatisfied, the module can legitimately expect the client to eventually put the module in a satisfying state.
This summarizes the core idea of our \emph{total correctness logically atomic triples}.

\subsection{Total correctness logically atomic triples}%
\label{sec:tclat-unfair-full}

\begin{figure}
  \input{./figures/lat/tclat-unfair.tex}
  \caption[]{Total correctness logically atomic triple for unfair, blocking, modules.
  }%
  \label{fig:tclat-unfair}
\end{figure}

Figure~\ref{fig:tclat-unfair} generalizes this specification approach to arbitrary \emph{unfair} blocking operations.
To start with, we generalize the eventually arrow notation to handle sequences of variables: $\vec{x} \eventually X$ (Figure~\ref{fig:tclat-unfair}, $\mycircled{\tusEa}$).
It engenders the preconditions $\vec{x} \not\in X$ $\mycircled{\tusEcond}$ and $\vec{x} \in X$ $\mycircled{\tusCcond}$ for $\abeta$ and $\aalpha$, respectively (cf.\ $b \eventually \myfalse$).

Operations may generally wish to block internally i.e.\ without involvement of the client.
In \heaplanglt, this requires that the levels of the signals they expect be below the levels of the thread's obligations.
The exclusive resource $\obligations{O}[\theta]$ $\mycircled{\tusPpr}$ asserts that thread $\theta$'s set of obligations is \emph{exactly} O.
Once again we omit $\theta$ if it matches the current thread.

\emph{Total correctness specification for unfair locks.}

We characterize unfair locks with the following specification.
\begin{align*}
\hoare{\TRUE}{&\texttt{create}\spac\TT}{\lockvar.\, L(\lockvar,\myfalse)} \\
{}^{\lev}_{\deg}\langle b \eventually \set{\myfalse}\!.\, \later L(\lockvar,b) \rangle~ &\texttt{acquire}\spac\lockvar ~\langle L(\lockvar,\mytrue) \ast b = \myfalse \rangle_{\mask} \\
{}_{\deg}\langle \later L(\lockvar,\mytrue) \rangle~ &\texttt{release}\spac\lockvar ~\langle L(\lockvar,\myfalse) \rangle_{\mask}^{\lev}
\end{align*}

Note that the triple for \lstinline{release} is slightly different than the one for \lstinline{acquire}.
For one, \lstinline{release}'s specification lacks the eventually arrow.
This means it is only passed $\aalpha$.
Moreover, $\lev$ is positioned to the right for \lstinline{release} and to the left for \lstinline{acquire}.
The distinction drawn here is whether the module has access to $\lev \prec O \ast \obligations{O}$ before (left) or after (right) running $\aalpha$.
Consequently, \lstinline{release}'s postcondition is $\Phi \ast \lev \prec O \ast \obligations{O}$ as opposed to just $\Phi$.

\section{Modular programming of fair modules}%
\label{sec:modular-programming-of-fair-modules}

\begin{figure}
    \begin{tabular}{c}
      \lstinputlisting{code/distinguishing-client/hard-aux/main.hllt}
    \end{tabular}\\
    \begin{tabular}{c||c}
      \lstinputlisting{code/distinguishing-client/hard-aux/a.hllt} &
      \lstinputlisting{code/distinguishing-client/hard-aux/b.hllt}
    \end{tabular}
  \caption{Instrumented version of \citet[\S~4]{DBLP:journals/toplas/DOsualdoSFG21}'s distinguishing client.
    If \lstinline{lk} is fair, the program is guaranteed to terminate under any fair schedule.
    Otherwise, the program may diverge.
    Notice that in this program the right thread can be bypassed at most once.}%
  \label{fig:distinguishing-client}
\end{figure}

The programming and specification patterns proposed in Sections~\ref{sec:modular-programming-in-heaplanglt} and~\ref{sec:tclat-unfair} are appropriate when the responsibility for ensuring that the blocking operation eventually finds the atomic precondition satisfied, is delegated entirely to the client.
In many cases, however, the distribution of responsibilities is more intricate.

Consider, for example, the erasure of the program in Figure~\ref{fig:distinguishing-client}.
In this program, there is no way for the client to guarantee, on their own, that the left thread will eventually find itself able to acquire the lock.
Indeed, this is only the case if the left thread's \lstinline{acquire} cannot be \emph{starved} by the lock implementation.
We say that an \lstinline{acquire} operation is \emph{bypassed} if during its execution the lock is released and subsequently acquired by another \lstinline{acquire}.
A lock is \emph{starvation-free} if each \lstinline{acquire} can only be bypassed finitely many times.
We call such locks \emph{fair}.
Concretely, if the lock is unfair, e.g.\ the \spinlock in Figure~\ref{fig:spinlock-classic-sassy}, this program may diverge.
But if it is a fair lock, e.g.\ the \ticketlock in Section~\ref{sec:ticketlock}, it terminates under any fair scheduler.

The division of responsibilities between module and client in case of fair locks, then, is as follows.
It is the client's responsibility to ensure that all bypassing acquire operations are eventually paired with a corresponding release.
The module, in turn, must ensure a finite number of bypasses before a given acquire operation succeeds.

In terms of our instrumentation, this distinction manifests as the program in Figure~\ref{fig:distinguishing-client} being safe or not.
With an unfair lock, the left thread's $\abeta$ has to guarantee possibly infinitely many \lstinline{release}s---one for each bypass.
To do so, it would need to create infinitely many expect permissions which is impossible (Section~\ref{sec:overview}).
Conversely, the auxiliary code is safe if a fair lock module ensures that for every bypass---of which there can be only finitely many---there exists a compensatory call permission.

\subsection{\Ticketlock}%
\label{sec:ticketlock}

\begin{figure}
\def\ownfld{\mathit{owner}} %
\def\nextfld{\mathit{next}} %
\def\heldfld{\mathit{held}} %
\def\mapnname{\mathsf{M}} %
\def\levelsmod{\Lev_{\mathsf{TL}}}
\def\gheld{\gname_{\mathsf{held}}}
\def\maskmod{\mask_{\mathsf{TL}}}
\begin{tabular}{cc}
\begin{lstlisting}
let acquire lk /*η κ*/ =
  let t =
  /*⟨if !(lk.next) = !(lk.owner)
   then (* $\cursivecomment[myghostcode]{\textbf{Acq-LP-1}}$ *) κ () 
   else (FAIRNESS;
     lk.κs[!(lk.next)] := κ);*/
   FAA lk.next 1/*⟩*/
  in
  while
  /*⟨if !(lk.owner) != t then η () else ();*/
   !(lk.owner) != t/*⟩*/
  do () done
\end{lstlisting} &
\begin{lstlisting}
let release lk /*κ*/ =
  /*⟨ (* $\cursivecomment[myghostcode]{\textbf{Rel-LP}}$ *) κ ();*/
   FAA lk.owner 1/*;
   if !(lk.owner) < !(lk.next)
   then (* $\cursivecomment[myghostcode]{\textbf{Acq-LP-2}}$ *)
     lk.κs[!(lk.owner)] () ⟩*/
\end{lstlisting}
\end{tabular}
\caption{Instrumented \ticketlock implementation. The ticket \lstinline{t} is the old value returned when \lstinline{lk.next} is bumped. \textsc{fairness} is a placeholder for the auxiliary code to satisfy the fair lock contract with respect to $\abeta$.}
\label{fig:ticketlock-impl-aux}
\end{figure}

Figure~\ref{fig:ticketlock-impl-aux} shows a \ticketlock~\cite{DBLP:journals/tocs/Mellor-CrummeyS91} implementation instrumented in the style of Section~\ref{sec:modular-programming-in-heaplanglt}.
The \ticketlock is a simple fair lock consisting of two fields: \lstinline{owner} ticket and \lstinline{next} available ticket, both initialized to zero.
During \lstinline{acquire}, the \lstinline{next} field is atomically incremented with the fetch-and-add (\lstinline{FAA}) instruction.
\lstinline{FAA} returns the old value which becomes the ticket.
Thereafter, the owner field is read in a busy-waiting loop until the read value matches the ticket.
Releasing the lock simply increments the owner counter by one.%
  \footnote{For simplicity we use \lstinline{FAA} for this operation too.}
 
Before revealing the auxiliary code that \textcolor{myghostcode}{\textsc{fairness}} stands-in for, we discuss the placement of $\abeta$ and $\aalpha$ as application of the programming pattern of Section~\ref{sec:modular-programming-in-heaplanglt}.
First, \lstinline{acquire} and \lstinline{release}'s $\aalpha$s must be executed at their respective linearization points.
Let $t$ be a ticket and $o$ and $n$ be the contents of \lstinline{owner} and \lstinline{next}, respectively.
The \ticketlock is held by the thread owning ticket $t$ iff $t = o$.
This implies two linearization points for \ticketlock \lstinline{acquire}:
\begin{inparaenum}[(1)]
  \item An internal one if \lstinline{FAA lk.next 1} returns the ticket $o$.
  Because \lstinline{FAA} returns the old value, $t = n = o$, while \lstinline{next}'s new value is $n + 1$.
  Therefore the lock linearizes at this moment, transitioning from not held to held.
  This situation is detected in the auxiliary code, executing $\aalpha$ accordingly (Figure~\ref{fig:ticketlock-impl-aux}, $\cursivecomment[myghostcode]{\textbf{Acq-LP-1}}$).
  \item An external one if the returned ticket $t < o$. In this case, \lstinline{acquire} linearizes at the \lstinline{FAA} of \lstinline{release} when \lstinline{owner} equals $t - 1$ ($\cursivecomment[myghostcode]{\textbf{Acq-LP-2}}$).
\end{inparaenum}
Regardless of whether \lstinline{release} linearizes an \lstinline{acquire} or not, its own linearization point happens with the increment of \lstinline{owner}.

Second, $\abeta$ is called when \lstinline{acquire} establishes that it is blocked on the client because its ticket $t$ does not equal the contents of \lstinline{owner}.
Notice that if \lstinline{acquire} linearizes at $\cursivecomment[myghostcode]{\textbf{Acq-LP-1}}$, $\abeta$ will not be called.
Otherwise, \lstinline{acquire} will see itself bypassed by all preceding ticket holders.

Indeed, the ticket regime enforces a first-in-first-out (FIFO) direct handoff succession policy~\cite{DBLP:conf/eurosys/Dice17,DBLP:journals/tocs/GuerraouiGLQT19}.
The auxiliary code placeholder thus compensates exactly that many bypasses.
\lstinline{/*(* $\cursivecomment[myghostcode]{FAIRNESS}\eqdef$ *) lower δ into !(lk.next) - !(lk.owner) times ⊤C*/}
In the \appendixorsupplement we describe how to pick $\degree$ and $\top_{\mathsf{eta}}$ modularly.

\section{Total correctness logically atomic triples for fair modules}%
\label{sec:tclat-fair}

To match the extended programming pattern of Section~\ref{sec:modular-programming-of-fair-modules} we adapt the notion of total correctness logically atomic triples to fair modules, here locks.
We say that every time the client fulfills its duty by calling \lstinline{release} a new \emph{round} starts:
\( \setlength\fboxsep{1pt} \langle L(\lockvar,\mytrue,\colorbox{greenbg}{$r$}) \rangle~ \texttt{release}\spac\lockvar ~\langle L(\lockvar,\mytrue,\colorbox{greenbg}{$r + 1$}) \rangle_{\mask} \).
Otherwise the specification for a fair \lstinline{release} is identical to an unfair one.

At the notation-level the impact of fairness on \lstinline{acquire} is similarly minimal.
\( \setlength\fboxsep{1pt} \langle (b,\colorbox{greenbg}{$r$}) \eventually[\colorbox{greenbg}{$r$}] \set{\myfalse} \times \nat.\, L(\lockvar,b,\colorbox{greenbg}{$r$}) \rangle~\texttt{TL\_acquire}\spac\lockvar~\langle L(\lockvar,\mytrue,r) \ast b = \myfalse \rangle_{\mask} \).
The addition of the rounds subscript to the eventually arrow $\eventually[r]$ indicates which part of the atomic precondition evolves to indicate new rounds.
All but one competing \lstinline{acquire} operation will see themselves bypassed for each new round and must thus be compensated with a call permission.
To show the difference with respect to the unfair encoding, we first recall the definition of our total correctness logically atomic triple for unfair locks e.g.\ \spinlock.%
  \footnote{We defer the treatment of obligations (i.e.\ the $\Lev_M$ annotation) because both \spinlock and \ticketlock's blocking behavior is entirely justified by the client.}
{
\def\modid{\mathsf{SL}}
\begin{align*}
  & \langle b \eventually \myfalse.\, L(\lockvar, b) \rangle~ \texttt{SL\_acquire}\spac\lockvar~\langle L(\lockvar,\mytrue) \ast b = \myfalse \rangle_{\mask} \eqdef \\&\qquad
  \forall \Phi, \abeta, \aalpha.\, (\callp{\top_\modid} \ast \tatomicupdate) \wand \wpre {\texttt{SL\_acquire}\spac\lockvar} {\Phi} \\&
  \text{where~} \tatomicupdate \eqdef \mathsf{gfp}(Z.\, \pvs[\top\setminus\mask][\emptyset] \exists b.\, L(\lockvar,b) \ast
    \left(\begin{array}{@{} l @{}}
      (b = \mytrue \wand \bswpre{\abeta}[\emptyset]{\callp{\top_\mathsf{SL}} \ast (L(\lockvar,b) \vsW[\emptyset][\top\setminus\mask] Z)}) \land {}\\
      (b = \myfalse \wand \bswpre {\aalpha} [\emptyset] {L(\lockvar,\mytrue) \vsW[\emptyset][\top\setminus\mask] \Phi})
    \end{array}\right))
\end{align*}
}
We use the round resource $R$ to encode that $\abeta$ requires one-time compensation per round.
{
\def\modid{\mathsf{TL}}
\begin{align*} &\setlength\fboxsep{1pt}
  \langle (b,\colorbox{greenbg}{$r$}) \eventually[\colorbox{greenbg}{$r$}] \set{\myfalse} \times \nat.\, L(\lockvar,b,\colorbox{greenbg}{$r$}) \rangle~\texttt{TL\_acquire}\spac\lockvar~\langle L(\lockvar,\mytrue,r) \ast b = \myfalse \rangle_{\mask} \eqdef{} \\ & \setlength\fboxsep{1pt}\qquad
  \forall \Phi,\abeta,\aalpha,\colorbox{greenbg}{$R$}.\, (\callp{\top_\modid} \ast \colorbox{greenbg}{$R(\_)$} \ast \tatomicupdate) \wand \wpre{\texttt{TL\_acquire}\spac\lockvar}{\Phi} \\&\setlength\fboxsep{1pt}
  \tatomicupdate \eqdef \mathsf{gfp}(Z.\, \pvs[\top\setminus\mask][\emptyset] \exists b,\colorbox{greenbg}{$r$}.\, L(\lockvar,b,\colorbox{greenbg}{$r$}) \ast{} \\ &\setlength\fboxsep{1pt}
  \left(\begin{array}{@{} l @{}}
  \bigl((b = \mytrue \ast (\colorbox{greenbg}{$\exists r_0.\, R(r_0) \ast (r_0 = r \lor \callp{\top_C})$}) \wand \bswpre{\abeta}[\emptyset]{\callp{\colorbox{greenbg}{$\bot_C$}} \ast \colorbox{greenbg}{$R(r)$} \ast (L(\lockvar,b,\colorbox{greenbg}{$r$}) \vsW[\emptyset][\top\setminus\mask] Z)}\bigr) \\
  {}\land \bigl((b = \myfalse \ast (\colorbox{greenbg}{$\exists r_0.\, R(r_0)$}) \wand \bswpre{}[\emptyset]{L(\lockvar,\mytrue,\colorbox{greenbg}{$r$}) \vsW[\emptyset][\top\setminus\mask] \Phi} \bigr)
  \end{array}\right))
\end{align*}
}
Notice that $R$ has to be treated ``linearly'' by the module.
Initially passed to the module at an unknown round, every call to $\abeta$ takes and returns $R$ in its pre- and postcondition until $\aalpha$ consumes it at the linearization point.
Calling $\abeta$ at round $r$ returns $R(r)$.
Future calls to $\abeta$ are free if the module can prove to $\abeta$ that it has already been called for the current round: $\exists r_0.\, R(r_0) \ast r_0 = r$ (recall that $r$ always expresses the current round).
Otherwise $\abeta$ requires a compensatory call permission.

\emph{Internal waiting.}
\begin{figure}
  \input{./figures/lat/tclat-fair.tex}
  \caption{Total correctness logically atomic triple for fair blocking modules.}%
  \label{fig:tclat-fair}
\end{figure}
Figure~\ref{fig:tclat-fair} generalizes this encoding to our encoding of total correctness logically atomic triples for \emph{fair} modules.
To support internal blocking we again require the caller to pass $\obligations{O}$ with the lower bound determined by a set of module degrees $\Lev_M$ ($\mycircled{\tfsLm}$,$\mycircled{\tfsOPpre}$,$\mycircled{\tfsOpreK}$).
In Section~\ref{sec:tclat-unfair-full} all changes to the thread's obligations had to be opaque to the client: $\abeta$ and $\aalpha$ require $\obligations{O}$ in their precondition.
For fair modules%
  \footnote{We could also view unfair modules as fair modules with a constant round.}
  we relax this restriction for $\abeta$ and allow any set of obligations $O'$ with $\Lev_M \preceq O'$ ($\mycircled{\tfsOpre}$, $\mycircled{\tfsOpost}$).
This permits calling $\abeta$ while holding internal obligations whose level is above or in $\Lev_M$.
Section~\ref{sec:cohortlock} discusses how the proof for a fair \cohortlock uses this affordance.

\emph{Total correctness specification for fair locks.}
\begin{align*}
& \hoare{\TRUE}{\texttt{create}\spac\TT}{\lockvar.\, L(\lockvar,\_,\myfalse)} \\
& {}^{\Lev_{\mathsf{FL}}}\langle (b,r) \eventually[r] \set{\myfalse}\times\nat.\, \later L(\lockvar,r,b) \rangle~ \texttt{acquire}\spac\lockvar ~\langle L(\lockvar,r,\mytrue) \ast b = \myfalse \rangle_{\mask} \\
& \langle \later L(\lockvar,r,\mytrue) \rangle~ \texttt{release}\spac\lockvar ~\langle L(\lockvar,r+1,\myfalse)\rangle_{\mask}^{\Lev_{\mathsf{FL}}}
\end{align*}

\section{Tool support}%
\label{sec:tool-support}

In this paper we used \iris\footnote{On paper only, not in \coq; a \coq mechanization of our metatheory is future work.} to verify safety of instrumented programs.
However, any other approach for verifying safety of higher-order fine-grained concurrent programs can be used as well, such as that implemented by the VeriFast tool for separation logic-based modular symbolic execution of concurrent C or Java programs \cite{DBLP:journals/corr/Vogels0P15}.
We used the existing encoding of Ghost Signals into VeriFast by \citet{DBLP:conf/cav/ReinhardJ20} to modularly verify termination of Java versions of the examples of this paper\footnote{See the \supplementorartifact.}, including the \cohortlock described in the next section.
While non-foundational---bugs in the axiomatization of Ghost Signals or in VeriFast itself could lead to unsoundness---these machine-checked example proofs dramatically increase our confidence that our approach can in fact be applied to modularly verify termination of highly non-trivial concurrent algorithms.\footnote{A mechanization of the example proofs in \iris in \coq would give even much greater confidence; this is future work.}

\section{\Cohortlock}%
\label{sec:cohortlock}

\begin{figure}
  \begin{subfigure}{.4\textwidth}
    \begin{tabular}{c}
      \lstinputlisting{code/cohortlock/acquire.hllt}
    \end{tabular}
  \end{subfigure}%
  \hfill
  \begin{subfigure}{.5\textwidth}
    \begin{tabular}{c}
      \lstinputlisting{code/cohortlock/release.hllt}
    \end{tabular}
  \end{subfigure}
  \hfill\null
  \caption[\Cohortlock]{\Cohortlock. \lstinline{CC} means \emph{current cohort}.
    If there are waiting threads from the same cohort, the \cohortlock will pass ownership directly to the next thread of the cohort by releasing the local lock \lstinline{LLs[CC]} but not the top-level lock \lstinline{TL}.
    To avoid starvation, such handoffs are performed a maximum of \lstinline{MAX} times.
    The \cohortlock is fair if its underlying locks are fair.
  }%
  \label{fig:cohortlock}
\end{figure}

Lock cohorting~\cite{DBLP:conf/ppopp/DiceMS12,DBLP:journals/topc/DiceMS15} is a design pattern to construct non-uniform memory access (\numa) aware locks from \numa unaware locks.
On a \numa architecture, threads can be grouped into cohorts belonging to the same \emph{node}.
While all threads share a logical address space, access to memory last accessed by a thread from the same cohort is much faster than otherwise.
Therefore, if there are threads of multiple cohorts competing for a lock, it is preferable to pass the lock to a thread of the same cohort to improve throughput.

Figure~\ref{fig:cohortlock} shows an implementation in \heaplang.
Specifically, \cohortlock{}s employ one top-level lock shared by all cohorts, plus a local lock per cohort.
To own a \cohortlock, both the local lock and the top-level lock must be owned.
In addition to \lstinline{acquire} and \lstinline{release}, the local lock must support an \lstinline{alone}  operation that only returns false if there is a thread waiting to acquire the lock.%
  \footnote{\lstinline{alone} may return false positives but never false negatives.}
For the \ticketlock this is easy: \lstinline{let alone lk = !(lk.owner) + 1 = !(lk.next)}.
Thus, to prefer passing ownership within the same cohort, the \cohortlock's \lstinline{release} queries the underlying local lock whether the releasing thread is alone.
If so both the top-level and local lock are released (in that order).
Otherwise, the \cohortlock \lstinline{release} can choose to only release the local lock while passing ownership of the top-level lock to the next thread  in the cohort (cf.\ Figure~\ref{fig:motivating-example}).
To prevent starvation of other cohorts, it only does so until a threshold is reached.
Moreover, the \cohortlock is only fair if the underlying locks are fair.

We proved that this lock implementation satisfies total correctness per our fair lock specification, assuming only that the \ticketlock{}s used in turn satisfy our fair lock specification.
We (non-foundationally) machine-checked our proof using \verifast; see the \supplementorartifact.

A core design decision of the proof is that \lstinline|acquire| creates a
signal (which we call the \emph{acquire signal}) during the
\lstinline|ticketlock_acquire LL| call's linearization point (LP), but only if
\lstinline|lk.passing[CC]| is false, and discharges it during
\lstinline|ticketlock_acquire TL|'s LP. Similarly, \lstinline|release| creates
a signal (the \emph{release signal}) during the
\lstinline|ticketlock_release TL| call's LP (again, if not passing) and discharges it during the \lstinline|ticketlock_release LL| call's LP.
If the $\abeta$ passed into \lstinline|ticketlock_acquire LL| is executed, some other thread in the cohort holds \lstinline|LL|, and either 1) has not yet acquired \lstinline|TL|, 2) holds \lstinline|TL|, or 3) has released \lstinline|TL|.
Correspondingly, this $\abeta$ expects either this thread's acquire signal, calls the client's $\abeta$, or expects the thread's release signal.
If the $\abeta$ passed into \lstinline|ticketlock_acquire TL| is executed, some other cohort holds \lstinline|TL|. At most \lstinline|MAX| number of successive threads of that cohort own the lock, so at most \lstinline|MAX| rounds of the client's $\abeta$ have to be called.

\section{Related work}%
\label{sec:related-work}

We propose the first approach for expressive modular termination verification of busy-waiting programs in a higher-order separation logic like Iris or that of VeriFast. This means that it benefits from the existing metatheory development and tool support for those logics, and furthermore that it is compatible with the other features and many of the other applications and extensions of those logics proposed in the literature, and is more accessible to the communities already familiar with those logics.

\emph{\tadalive.}
\citet{DBLP:journals/toplas/DOsualdoSFG21}'s \tadalive is a bespoke first-order separation logic to compositionally reason about termination of fine-grained concurrent programs.
We believe the clear distinction between module and client responsibilities for termination advocated for---and demonstrated by---\tadalive set a new standard for modular termination verification.

In contrast to \sassy, however, it does not support unstructured concurrency or external linearization points.
Another notable difference is the interpretation of ``$\eventually$'': in \tadalive it expresses that the client must always eventually fulfill the atomic precondition.
Since this is insufficient to prove termination of unfair locks their unfair lock specification requires an \emph{impedance budget}, an ordinal that bounds the number of acquire invocations.
By contrast, our $\eventually$ requires the client to justify progress in terms of call permissions whenever the atomic precondition is not in the module-desired state. As a result, in contrast to \tadalive, \sassy supports program $\mathbb{C}_2$ \cite[p. 79]{DBLP:journals/toplas/DOsualdoSFG21}.\footnote{See our proof in the \supplementorartifact.}

A more serious limitation of \tadalive is that, like our approach, it associates a level (called a \emph{layer} in \tadalive) with each lock and requires that only environment obligations at levels below the lock's level are used to justify blocking. Unlike our approach, however, it also does not support asynchronously burdening another thread with obligations. We believe that, together, these limitations mean it does not support a variant of the client program of Figure~\ref{fig:motivating-example} where the middle thread busy-waits for the left thread to clear the flag, whereas our approach does.\footnote{For more details and our proof of this program, see the \supplementorartifact.}

\emph{\iris.}
 Although \iris~\cite{DBLP:conf/popl/JungSSSTBD15,DBLP:conf/icfp/0002KBD16,DBLP:journals/jfp/JungKJBBD18,DBLP:journals/pacmpl/SpiesGTJKBD22} is fundamentally limited to safety properties due to its use of step-indexing, many approaches have been developed to work around this limitation.
The \iris developments discussed below have been foundationally mechanized in \iris' \coq-framework.
Transfinite \iris~\cite{DBLP:conf/pldi/SpiesGGTKDB21} uses ordinals as step-index instead of natural numbers.
This solves the biggest issue of a naive fuel-based approach in \iris at the logic level, obviating the need for auxiliary code.
\citet{DBLP:conf/pldi/SpiesGGTKDB21} prove program termination via refinement of an ordinal in addition to termination-preserving refinement of another program.
Transfinite \iris does not, however, address blocking concurrent programs under fair scheduling.
\citet{DBLP:conf/esop/TassarottiJ017} proved termination-preserving refinement for concurrent programs, but with the rather severe limitation that stuttering in the refinement must be bounded by a fixed natural number.
Trillium~\cite{popl24-trillium} considers a more general form of refinement between traces of a program and a model.
Fairis, an instantiation of Trillium for reasoning about concurrent programs under fair scheduling, allows them to reason about liveness properties, including termination.
Their approach requires a terminating model, which is then proven to be refined by the program.
Modular composition of models is mentioned as future work.

\section{Conclusion and future work}%
\label{sec:conclusion}%
\label{sec:future-work}

We propose the first approach for expressive modular verification of termination of busy-waiting programs in a higher-order separation logic like \iris. We demonstrate its expressiveness and amenability to tool support by reporting on a (non-foundational) machine-checked proof that a \cohortlock---built on top of other locks and using lock handoff internally (where a lock is acquired by one thread and released by another)---satisfies our fair lock specification, assuming only that the underlying locks it is built on also satisfy our fair lock specification. It is a two-step approach. First, the program is instrumented modularly with auxiliary constructs from a fuel-based terminating programming language that has the property that if an instrumented program is safe, the original program is also safe and terminates under fair scheduling. Secondly, any logic for verification of safety of higher-order fine-grained concurrent programs is used to modularly verify safety of the instrumented program. We propose auxiliary code patterns for modularly instrumenting unfair and fair blocking modules, and corresponding specification patterns for modularly verifying the instrumented programs' safety.

We identify three key directions for future work.
First, in this paper we presented how \iris can be used to reason about \heaplanglt programs.
The implementation and proof of the \ticketlock leveraged \heaplang's higher-order nature and \iris' ability to reason about helping.
Another key strength of \iris with respect to the verification of concurrent modules is its support for erasable prophecy variables to reason about future-dependent linearization points.
\heaplanglt inherits support for prophecy variables but cannot read them in erasable code.
To enable verifying total correctness of concurrent modules with future-dependent linearization points, we plan to modify prophecy variables, allowing auxiliary code to read them.
We suspect that this might additionally enable attaching prophecy resolutions to logically atomic operations, not just atomic operations, a current limitation~\cite{DBLP:journals/pacmpl/JungLPRTDJ20}.
Second, a natural next step would then be to integrate the presented work and this extension into a \coq mechanization.
This would give foundational backing to our approach.

\begin{acks}
This research is partially funded by the Research Fund KU Leuven, and by the Cybersecurity Research Program Flanders.
\end{acks}

\bibliography{ref}
\bibliographystyle{ACM-Reference-Format}

\clearpage

\appendixpage
\appendix
\section{Modular picking of degrees and levels}

\subsection{Unfair modules}

All modules of a \heaplanglt program are generally parameterized by some degrees among which some order must hold. If we assume the dependency relation on modules forms a tree, we can pick these degrees modularly as follows. Each module $M$ defines a \emph{local domain of degrees} $\Deg_M \subseteq \mathit{Val}$, equipped with a well-founded order ${<}_{\Deg_M}$. In defining its local domain of degrees, a module may refer to those of its dependencies. It also defines any \emph{named degrees} relevant to clients as elements of that local domain. For a client to obtain a module's code (which includes the code of its dependencies), it must supply a \emph{degree embedding} $\epsilon_M$, an order-preserving function that maps the module's local domain of degrees to the global universe of degrees $\Degdom$.

\emph{Instantiating the degrees.}
Recall that the example client program (Figure~\ref{fig:motivating-example-folded} in the main text) is parameterized by the degree $\delta_3$ of the call permission expected by the lock module's $\mathsf{acquire}$ implementation, both when it is invoked and after it invokes its $\abeta$ argument, as well as by degrees $\delta_1$ and $\delta_2$ such that $\delta_3 < \delta_2 < \delta_1$.

Let's apply this approach to our example program (Figure~\ref{fig:motivating-example-folded} in the main text).
The spinlock module simply defines $\Deg_\mathsf{LK} = \{0\}$ with $<_{\Deg_\mathsf{LK}} = \emptyset$. It defines named degree $\top_\mathsf{LK} = 0$. The client program defines its $\Deg_\mathsf{C} = \{0\} \times \Deg_\mathsf{LK} \cup \{(1, 0), (2, 0)\}$, ordered lexicographically. It also defines named degree $\top_\mathsf{C}$, the degree of the call permission it needs to run, as equal to $(2, 0)$. Given a degree embedding $\epsilon_\mathsf{C}$, the client module builds its code by first obtaining the code for the lock module by passing embedding $\epsilon_\mathsf{LK} = \lambda \delta.\, \epsilon_\mathsf{C}((0, \delta))$ and then instantiating its own code with $\delta_1 = \epsilon_\mathsf{C}((2, 0)), \delta_2 = \epsilon_\mathsf{C}((1, 0)), \delta_3 = \epsilon_\mathsf{C}((0, \top_\mathsf{LK}))$.

We can now use the client program's local order ${<}_{\Deg_\mathsf{C}}$ as the order on $\Degdom$ and instantiate the client program with $\epsilon_\mathsf{C} = \lambda \delta.\,\delta$ to obtain fully instantiated \heaplanglt code that we can run, starting with a call permission of degree $\top_\mathsf{C}$.

\emph{Instantiating the levels.}
We can apply the exact same approach to modularly determine a program's levels: $\Lev_\mathsf{C} = \{0\} \times \Lev_\mathsf{LK} \cup \{(0, 0), (1, 0)\}$, ordered lexicographically, with $\lev_1 = \epsilon_\mathsf{C}((0, 0))$ and $\top_\mathsf{C} = (1, 0)$. The client program requires that the caller's obligations are above (or equal to) $\top_\mathsf{C}$.

\subsection{Fair modules}

\emph{Instantiating the degrees.}
In Section~\ref{sec:modular-programming-of-fair-modules} of the main text we discuss modular programming of fair modules.
Since in the case of fair modules, call permissions flow from the client to the module, then back to the client's $\eta$, and then finally back to the module, we extend the approach for modularly determining a program's degrees to support this. A fair blocking module's local domain of degrees is no longer fixed; it now depends on the client's \emph{eta domain of degrees} $\Deg_M^\mathsf{eta}$, a subset of $\mathit{Val}$ equipped with a well-founded order as well as a distinguished top element and bottom element. Given a client's eta domain, the module provides a local domain as well as an embedding from the eta domain into the local domain. The client then provides an embedding of the local domain into the global universe to obtain the module's code, as before.

Applying this approach to the \ticketlock of Section~\ref{sec:ticketlock} in the main text:
\begin{itemize}
  \item The \ticketlock module is paramterized by the client's eta domain of degrees $\Deg_C^{\mathsf{eta}}$ with named degrees $\top_C^{\mathsf{eta}}$ and $\bot_C^{\mathsf{eta}}$
  \item The module defines its module-local degrees $\Deg_{M} \eqdef \set{0} \times \Deg_C^{\mathsf{eta}} \cup \set{(1,0)}$, ordered lexicographically.
    The named degree $\top_M = \top_{\mathsf{TL}} ({}= \degree~\text{in Section~\ref{sec:ticketlock} of the main text}) = (1,0)$.
  \item Additionally, the module defines the embedding $\epsilon_{M} \eqdef \lambda \degree.\, (0,\degree)$.
  \item Finally, the client provides the embedding $\epsilon_{C}$ into the universe of degrees.
\end{itemize}
Thus $\degree$ and $\top_{\mathsf{eta}}$ in
\lstinline{/*lower δ into !(lk.next) - !(lk.owner) times ⊤C*/}
refer to $\epsilon_{C}(\top_{\mathsf{TL}})$ and $\epsilon_{C}(\epsilon_{M}(\top_{C}^{\mathsf{eta}}))$, respectively.

\section{\heaplanglt: a terminating language}

\subsection{Absence of infinite fair executions}

We prove Theorem~\ref{thm:no-infinite-fair-executions} of the main text.
Due to the setup's similarity, the proof is similar to that of \citet{DBLP:conf/cav/ReinhardJ20,Reinhard2021GhostSignalsTR}.

\begin{theorem}[Absence of infinite fair executions]\label{thm:no-infinite-fair-executions-appendix}
  A \heaplanglt program of the form $e; \Finish$ does not have infinite fair executions.
\end{theorem}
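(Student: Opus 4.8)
The plan is to argue by contradiction, deriving from an assumed infinite fair execution an infinite strictly descending chain in a well-founded order, following the measure/descent argument of \citet{DBLP:conf/cav/ReinhardJ20,Reinhard2021GhostSignalsTR}. First I would reduce non-termination to recursion. The auxiliary operations of \heaplanglt are all single-step and loop-free, and the call-free fragment is strongly normalizing. Hence any execution containing only finitely many function applications \ruleref{BetaS} is itself finite: it is a finite interleaving of finitely many terminating call-free segments, since each \ruleref{ForkS} step spawns only one child and creating an unbounded number of threads would itself require some thread to loop, i.e.\ to call. Therefore an infinite execution must perform infinitely many \ruleref{BetaS} steps.

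Second, I would set up the fuel accounting. Every \ruleref{BetaS} step consumes one call permission of the minimal degree $\degree[0]$, so infinitely many call permissions are consumed over the execution. Call permissions are produced only by \ruleref{LowerS} and \ruleref{ExpectS}. A \ruleref{LowerS} step replaces one permission of degree $\degree$ by finitely many of strictly smaller degree $\degreeA$, which strictly decreases the global multiset of permission degrees under the multiset extension of the well-founded order on $\Degdom$; hence \ruleref{LowerS} alone can never sustain the consumption. Consequently \ruleref{ExpectS} must fire infinitely often.

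Third comes the descent on levels, which is the conceptual heart. Each \ruleref{ExpectS} on a signal $s$ requires $s$ to be unset and its level to lie strictly below every obligation level of the expecting thread ($\lev \prec O$), and it does not consume the expect permission; so a single expect permission can fuel unboundedly many steps, but only while $s$ stays unset. Since a signal is set at most once and setting it discharges the associated obligation (\ruleref{SetSignalS}, \ruleref{FinishS}), I would consider the signals that are never set during the execution. Assuming such a signal exists (justified below), well-foundedness of $\Levdom$ yields one, say $s^*$, of minimal never-set level $\lev^*$. Intuitively, the thread obligated to set $s^*$ can neither set it nor execute \ruleref{FinishS} while carrying $(s^*,\lev^*)$, so to keep running forever it must busy-wait, i.e.\ itself perform \ruleref{ExpectS} on some unset signal whose level, being below its obligations, is strictly below $\lev^*$; this signal, if never set, contradicts the minimality of $\lev^*$, and the acyclicity guaranteed by levels forbids the situation from closing up into a cycle.

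The main obstacle is turning this descent into a rigorous argument, and in particular ruling out the escape in which a thread keeps expecting an ever-changing supply of fresh signals, each eventually set, rather than a single persistently-unset one. This is exactly where the \emph{prepaid} discipline of \ruleref{NewExpectPermS} is essential: each new expect permission burns a call permission of a strictly higher degree, so an unbounded appetite for new signals would itself demand the very unbounded fuel we are trying to exclude. I expect the cleanest way to close the gap---and to justify the nonemptiness used above---is to package the whole argument as a single well-founded measure on configurations, lexicographically combining the multiset of call-permission degrees with a level-stratified potential of the unset signals and their outstanding expect permissions, and then to verify that \ruleref{BetaS} strictly decreases this measure while \ruleref{ExpectS}, \ruleref{NewExpectPermS}, \ruleref{LowerS}, \ruleref{NewSignalS} and \ruleref{SetSignalS} never increase it. An infinite fair execution would then produce an infinite strictly descending chain in a well-founded order, the desired contradiction.
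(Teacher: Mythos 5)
Your steps 1--2 are fine (global Dershowitz--Manna accounting does show that \ruleref{BetaS} must fire infinitely often only if \ruleref{ExpectS} does), and your step 3 is the right intuition, but the repair you propose in step 4 --- a well-founded measure on \emph{configurations} that strictly decreases at \ruleref{BetaS} and never increases at the other rules --- cannot exist, so the gap you correctly identified in step 3 is not closed. Concretely, \ruleref{ExpectS} adds a call permission while consuming \emph{nothing}: the signal stays unset and the expect permission is retained, so your ``level-stratified potential of unset signals and outstanding expect permissions'' is unchanged and the call-permission component strictly grows; the lexicographic measure therefore \emph{increases} at \ruleref{ExpectS}. More decisively, any configuration measure with the properties you demand would bound the number of \ruleref{BetaS} steps in \emph{every} execution, fair or not --- but the theorem is false without fairness (a thread busy-waiting with \ruleref{ExpectS} on an unset signal can be scheduled forever while the thread obligated to set the signal is starved; that is an infinite unfair execution). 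Hence any correct proof must invoke fairness somewhere, and your outline never does, except implicitly in the phrase ``to keep running forever.''

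The paper escapes this by making the potential depend on the \emph{future of the assumed infinite execution}, not on the configuration alone. It defines $\Sinf$ as the set of signals \Expected infinitely often, builds the program-order tree of steps (which also handles obligations moving across forks, a point your ``the thread obligated to set $s^*$'' glosses over), and assigns to a path the fuel consisting of its thread's call permissions plus, for each expect permission $(s,\degree)$ with $s \notin \Sinf$, exactly $n - j + 1$ copies of $\degree$, where $n$ is the step of the \emph{last} \lstinline{Expect} of $s$ --- a quantity only meaningful relative to the fixed infinite execution. This budget makes \ruleref{ExpectS} on non-$\Sinf$ signals fuel-neutral, so any path that never \lstinline{Expect}s an $\Sinf$ signal is finite. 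Then: if $\Sinf = \emptyset$, K\H{o}nig's lemma yields an infinite path, contradiction; if $\Sinf \neq \emptyset$, the path carrying the obligation for a \emph{minimal-level} signal $\smin \in \Sinf$ is infinite (this is exactly where fairness is used: $\smin$ can never be set, so that path never finishes and keeps being scheduled), yet by level-minimality it can never \lstinline{Expect} any $\Sinf$ signal --- contradicting the finite-paths lemma. Your minimal never-set signal is close to this Case~2, but ``never set'' is the wrong set ($\Sinf \subseteq$ never-set, and only the $\Sinf$-based choice makes the descent terminate); the future-looking budget and the path/$\Sinf$ machinery are the missing ingredients, and they cannot be replaced by a measure on states.
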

\begin{proof}
By contradiction.
Thus, throughout this section---including the auxiliary definitions and lemma below---we assume an infinite fair execution of a \heaplanglt program $e; \Finish$, i.e.\ an infinite sequence of machine configurations related by machine steps where, if in a configuration at index $i$ a thread $\theta$ has not finished, then at some index $j \ge i$ that thread takes a step.
Additionally, let $\Sinf$ be the set of signals \Expected infinitely often in this execution.

\begin{definition}[Program order graph]
  We construct the infinite \emph{program order graph} (in fact, a tree) as follows.
  Each step in the infinite sequence, identified by its position $i$, corresponds to a node in the tree.
  Each node is connected with a directed edge to the next step of the same thread, and if the step is a fork, to the first step of the child.
\end{definition}

\begin{definition}[Path fuel]\label{def:path-fuel}
  Consider an infinite path in the program order graph starting at some index $i$ that never $\Expect$s a signal in $\Sinf$.
  We define the path fuel for this path at each machine step $j \geq i$ as the multiset of degrees obtained by combining the path's multiset of call permissions at configuration $j$ with $n - j + 1$ copies of degree $\delta$ for each of the path's expect permissions at configuration $j$ for signal $s$ at degree $\delta$ where signal $s$ is last \Expected on the path at step $n$:
  \begin{align*}
    \mathsf{PF}(j) \eqdef{} & \istate.\textsc{CallPerms}(\theta) \uplus{} \\
    & \biguplus_{ (s,\delta) \in \istate.\stateExpectPerms(\theta) }
        \left. \begin{cases}
            [(n - j + 1) \cdot \degree] & \text{$s$ last \Expected at step $n$} \\
            \emptyset & \text{otherwise, i.e.\ $s \in \Sinf$}
        \end{cases} \right. \\
    \text{where} \\
    \theta~\text{is}&~\text{the thread id of the first step on the path whose index is not less than $j$.} \\
  \istate~\text{is}&~\text{the machine state before step $j$.}
  \end{align*}
  Notice that $n - j + 1$ constitutes an upper bound on the number of times the path will \Expect this expect permission. Notice that if the path owns any expect permissions for signals in $\Sinf$, they are ignored, since, by our assumption, the path will not use them anyway.

  Like other multisets, we order path fuel values by the well-founded Dershowitz-Manna order~\cite{DBLP:journals/cacm/DershowitzM79,DBLP:journals/aaecc/Coupet-GrimalD06}.
\end{definition}

\begin{lemma}[Finite paths]\label{lem:finite-paths}
  A path that never $\Expect$s a signal in $\Sinf$ is finite.
\end{lemma}
\begin{proof}
  Suppose otherwise.
  Let $F$ be the set of nodes on the path that are function calls.
  Since \heaplanglt's only source of divergence are function calls, $F$ must be infinite.
  All steps in $F$ decrease the path fuel due to function calls consuming a call permission.
  All steps---on or off the path---also decrease the path fuel or leave it unchanged.
  Thus, the infinite path implies an infinite descending chain.
  A contradiction.
\end{proof}

We now continue the proof of Theorem~\ref{thm:no-infinite-fair-executions} by case splitting on whether the execution \Expects any signals infinitely often or not.\\
\textbf{Case 1}: $\Sinf = \emptyset$.
By Lemma~\ref{lem:finite-paths}, all paths of the program order graph must be finite.
However, applying K\H{o}nig's lemma we conclude that the infinite program order graph of the infinite fair execution must have at least one infinitely long path.
A contradiction.\\
\textbf{Case 2}: $\Sinf \neq \emptyset$.
Let $\smin \in \Sinf$ be a signal of minimal level in $\Sinf$.
Such a signal must exist since the universe of levels is well-founded.
Let $p$ be the path that carries the obligation to set $\smin$ i.e.\ the path constituted by the set of steps by a thread while holding the obligation for $\smin$.
This path must be infinite: because signal $\smin$ is \Expected infinitely often it can never be set.
Thus, $p$ can never $\Finish$.
Moreover, since the execution is fair, the thread holding $\smin$'s obligation will always eventually be scheduled.
Finally, recall that a thread can only $\Expect$ a signal whose level is below the levels of the obligations held by that thread.
Holding the obligation for $\smin$, whose level is minimal in $\Sinf$, $p$ can never $\Expect$ a signal in $\Sinf$.
Applying Lemma~\ref{lem:finite-paths} again, we arrive at the desired contradiction.
\end{proof}

\subsection{Unsound \lstinline{Expect} counterexample}

Recall the unsound step-rule for $\Expect$ in Section~\ref{sec:overview} of the main text:
\begin{mathpar}
\inferH{UnsoundExpect}
  { \istate.\stateSignals(s) = (\lev,\myfalse) \\
    \lev \prec \istate.\stateObligations(\theta) }
  {\Expect\spac \theta\spac s\spac \degree, \istate \hstepi{\theta} \TT, \istate : \stateCallPerms[\theta \updmsetunion \set{\degree}]}
\end{mathpar}
Figure~\ref{fig:unsound-livelock} gives a counterexample using  \ruleref{UnsoundExpect} i.e.\ a program which admits an infinite fair execution.

The left (main) thread holds lock $x$ while the right thread (forkee) holds lock $y$.
Both threads try to acquire the other thread's lock before releasing their own,  resulting in a \emph{livelock}.

Notice that the issue is not only that \ruleref{UnsoundExpect} is allowed to pick the degree at which to generate a call permission arbitrarily.
Additionally, the two threads could ``juggle'' obligations by spawning and destroying signals arbitrarily.
This is prevented by requiring an expect permission in order to \lstinline{/*Expect*/}.

\begin{figure}
\begin{tabular}{c}
\lstinputlisting{code/unsound/cyclic/main.hllt}
\end{tabular}
\begin{tabular}{r||l}
\lstinputlisting{code/unsound/cyclic/a.hllt} &
\lstinputlisting{code/unsound/cyclic/b.hllt} 
\end{tabular}
\caption{Livelocked program admitting infinite fair executions given the unsound step-rule for \lstinline{Expect}.
  Here we assume that \lstinline{acquire} is the simple \spinlock (hence the degree $\degree[0]$).}%
\label{fig:unsound-livelock}
\end{figure}

\section{Proving safety for \heaplanglt programs}
\subsection{Weakest precondition}

We adapt the weakest precondition of Iris 4.0~\cite{DBLP:journals/pacmpl/SpiesGTJKBD22,iris-technical-reference} in Figure~\ref{fig:adapted-weakestpre}.
Important changes:
\begin{itemize}
  \item We assume a fixed state interpretation (Figure~\ref{fig:state-interpretation}) suitable for \heaplanglt.
  \item We always prove non-stuckness.
  \item Our definition of reducibility $\red(\expr,\istate,\theta)$ (Figure~\ref{fig:adapted-red}) accepts $\mathsf{None}$, which we use to support the $\abort$ command.
    Notice that $\red(\expr,\istate,\theta)$ admits $\abort$ and that $\mynone$---the expression stepped to by $\abort$---vacuously satisfies the weakest precondition (see also the proof rule of $\abort$ in Figure~\ref{fig:heaplanglt-extensions-proof-rules}).
\end{itemize}

\begin{figure}
\begin{align*}
  \textdom{wp}(\stateinterp, \pred_F, \stuckness) \eqdef{}&
    \setlength\fboxsep{1pt}
    \MU \textdom{wp\any rec}. \Lam \mask, \expr\colorbox{greenbg}{$, \theta$}, \pred. \\
        & (\Exists\val. \toval(\expr) = \val \land \pvs[\mask] \pred(\val)) \lor {}\\
        & \Bigl(\toval(\expr) = \bot \land \All \istate, n_s, \vec\obs, \vec\obs', n_t. \stateinterp(\istate, n_s, \vec\obs \dplus \vec\obs', n_t) \vsW[\mask][\emptyset] {}\\
        & \setlength\fboxsep{1pt}
          \qquad (s = \NotStuck \Ra \red(\expr, \istate\colorbox{greenbg}{$, \theta$})) * \All \expr', \istate', \vec\expr. (\expr, \istate \stepi[\vec\obs]{\colorbox{greenbg}{$\theta$}} \expr', \istate', \vec\expr) \wand \laterCredit{(n_\rhd(n_s)+1)} \wand {}\\
        & \setlength\fboxsep{1pt}
          \qquad\qquad (\pvs[\emptyset]\later\pvs[\emptyset])^{n_\rhd(n_s)+1} \pvs[\emptyset][\mask]\stateinterp(\istate', n_s + 1, \vec\obs', n + |\vec\expr|) * \textdom{wp\any rec}(\mask, \expr', \pred\colorbox{greenbg}{$, \theta$}) * {}\\
        & \setlength\fboxsep{1pt}
          \qquad\qquad\qquad \Sep_{(\theta', \expr'') \in \vec\expr} \textdom{wp\any rec}(\top, \expr'', \pred_F\colorbox{greenbg}{$, \theta'$})\Bigr) \\
  \setlength\fboxsep{1pt}
  \wpre[\stateinterp;\pred_F]\expr[\stuckness;\mask\colorbox{greenbg}{$, \theta$}]{\Ret\val. \prop} \eqdef{}&
    \setlength\fboxsep{1pt}
    \textdom{wp}(\stateinterp,\pred_F,\stuckness)(\mask, \expr\colorbox{greenbg}{$, \theta$}, \Lam\val.\prop)
\end{align*}
\caption{We adapt Iris' weakest precondition~\cite[p.\ 35]{iris-technical-reference}, specializing it to \heaplanglt. We add a notion of thread ids that is built into the operational semantics to support our auxiliary constructs.
  We fix $\stuckness = \NotStuck$, $n_\rhd(n_s) = 0$, and the state interpretation $\stateinterp$ of Figure~\ref{fig:state-interpretation}.}%
\label{fig:adapted-weakestpre}
\end{figure}

\subsection{State interpretation}

Our fixed state interpretation (Figure~\ref{fig:state-interpretation}) to tie the state of Figure~\ref{fig:heaplanglt-state-appendix} to the proof state follows the general pattern of tying the physical state to the ghost state of Iris.
Iris' authoritative camera~\cite[4.8]{iris-technical-reference} is used to store the authoritative fragment in the state interpretation, itself kept by the weakest precondition.
This still permits fragments to be given out for modular proofs.
Nevertheless, the authoritative part in the state interpretation makes sure that at every step the ghost state matches the physical state.

\begin{figure}
  \begin{align*}
    \stateinterp(\istate, n_s, \vec\obs, n_t) \eqdef
             & \ownGhost{\gamma_{\stateHeap}}{\authfull \istate.\stateHeap : \authm(\Loc \fpfn (\fracm \times \agm(\Val))} \ast{} & \text{cf.\ \citet{DBLP:journals/jfp/JungKJBBD18}} \\
             & \ownGhost{\gamma_{\textsc{StepCount}}}{\authfull n_s : \monom(\nat)} \ast{} & \text{cf.\ \citet{DBLP:conf/pldi/MatsushitaDJD22}}\\
             & \left(\begin{array}{ll}
                 \exists \Pi. & \ownGhost{\stateProphs}{\authfull \Pi} \ast \dom(\Pi) = \istate.\stateProphs \ast{} \\
                              & \forall\{p \gets \mathit{vs}\} \in \Pi.\, \mathit{vs} = \mathsf{filter}(p,\vec{\obs})) \ast{}
                \end{array}\right)
                \ast{} & \text{\citet{DBLP:journals/pacmpl/JungLPRTDJ20}} \\
             & \text{Our extensions for \heaplanglt:} \\
             & \ownGhost{\gamma_{\stateHeapA}}{\authfull \istate.\stateHeapA : \authm(\Loc \fpfn (\fracm \times \agm(\Val)))} \ast{} \\
             & \ownGhost{\gamma_{\stateSignals}}{\authfull \istate.\stateSignals : \authm(\Siglocdom \fpfn \exm(\Levdom \times \mathbb{B}))} \ast{} \\
             & \ownGhost{\gamma_{\stateObligations}}{\authfull \istate.\stateObligations : \authm(\TIddom \fpfn \exm(\multisetm(\Siglocdom \times \Levdom)))} \ast{} \kern-30ex{} \\
             & \ownGhost{\gamma_{\stateCallPerms}}{\authfull \istate.\stateCallPerms : \authm(\TIddom \fpfn \multisetm(\Degdom))} \ast{} \\
             & \ownGhost{\gamma_{\stateExpectPerms}}{\authfull \istate.\stateExpectPerms : \authm(\TIddom \fpfn \setm(\Siglocdom \times \Degdom))}
  \end{align*}
  \caption{Extended state interpretation.}%
  \label{fig:state-interpretation}
\end{figure}

\begin{figure}
  \begin{mathpar}
    \inferrule
      { \Exists \vec\obs, \expr_2, \istate_2, \vec\expr. \expr,\istate \stepi[\vec\obs]{\theta} \expr_2,\istate_2,\vec\expr }
      { \red(\expr, \istate, \theta) }

    \inferrule
      { \expr,\istate \stepi[\nil]{\theta} \mynone }
      { \red(\expr, \istate, \theta) }
  \end{mathpar}
  \caption{Adaption of $\red(\expr,\istate)$ to admit the $\abort$ command.}%
  \label{fig:adapted-red}
\end{figure}

\subsection{Proof rules for \heaplanglt extensions}

Figure~\ref{fig:heaplanglt-extensions-proof-rules} gives proof rules for our extensions of \heaplang.
They directly reflect the step rules of the operational semantics of these extensions (main text Figure~\ref{fig:heaplanglt-extensions-appendix}, reproduced in this appendix in Figure~\ref{fig:heaplanglt-extensions-appendix}) and can be derived from the weakest precondition (Figure~\ref{fig:adapted-weakestpre}).

\begin{figure}
  \begin{align*}
    \obligations{O}[\theta] \ast{} \quad{} \\
      \later(\forall s.\, \signal{s}{\lev}{\myfalse} \ast \obligations{O \cup \set{(s,\lev)}}[\theta] \wand \Phi(s)) \proves{}& \wpre {\langkw{NewSignal}\spac\theta\spac \lev} {\Phi} \\
    \signal{s}{\lev}{\myfalse} \ast \obligations{O \cup \set{(s,\lev)}}[\theta] \ast{} \quad{}\\
      \later(\signal{s}{\lev}{\mytrue} \ast \obligations{O}[\theta] \wand \Phi())) \proves{}& \wpre {\langkw{SetSignal}\spac \theta\spac s} {\Phi} \\
    \mathit{CP}_{\theta} \ast \obligations{O}[\theta] \ast \mathit{EP}_{\theta} \ast
    \quad{}\\
    \later\Bigl(\forall \theta'.\, \obligations{O\setminus\overline{s}}[\theta] \ast \obligations{\overline{s}}[\theta'] \ast \mathit{CP}_{\theta} \ast \mathit{CP}_{\theta'} \ast \mathit{EP}_{\theta} \ast \mathit{EP}_{\theta'}
    \Bigr) \vsW\quad{} \\
       \later\Phi() \ast \later\wpre e {\obligations{\emptyset}[\theta']} \proves{}& \wpre {\langkw{fork}\spac\overline{s}\spac\spac \expr} {\Phi} \\
    \TRUE \ast \later\Phi(\theta) \proves{}& \wpre {\langkw{CurrentThread}} [\theta] {\Phi} \\
    \obligations{\emptyset}_{\theta} \ast \later\Phi \proves{}& \wpre {\Finish}[\theta] {\Phi} \\
    \bswpre e {x.\, \Phi} \proves{}& \wpre {\langle\expr\rangle} {x.\, \Phi} \\
    \callp{\degree[0]} \ast \later(\wpre {\subst{{\subst{\expr}{\lvarF}{(\MyRecE{\lvarF}{\lvar}{\expr})}}}{\lvar}{v}} {\Phi}) \proves{}& \wpre {(\MyRecV{\lvarF}{\lvar}{\expr})\spac v} {\Phi} \\
    \callp{\delta}[\theta'] \ast \later(\callp{[n \cdot \delta']}[\theta'] \wand \Phi) \proves{}& \wpre {\langkw{lower}\spac \degree\spac \langkw{to}\spac n\spac \langkw{times}\spac \degreeA \langkw{at}\spac \theta'} {\Phi} \\
    \callp{\degree}[\theta'] \ast \signal{s}{\_}{\_} \ast \degreeA < \degree \ast \later(\callp{\degree}[\theta'] \wand \waitp{s}{\degreeA}[\theta'] \wand \Phi) \proves{}& \wpre {\langkw{NewExpectPerm}\spac \theta'\spac s\spac \degree\spac \degreeA} {\Phi} \\
    \signal{s}{\lev}{\myfalse} \ast \obligations{O}[\theta'] \ast \lev \prec O \ast \waitp{s}{\degree}[\theta'] \ast \later(\callp{\degree}[\theta'] \wand \Phi) \proves{}& \wpre {\langkw{Expect}\spac\theta'\spac s\spac \degree} {\Phi} \\
    \TRUE \proves{}& \wpre {\abort} {\FALSE}
  \end{align*}
  \caption{Proof rules for \heaplanglt's extensions.
    }%
  \label{fig:heaplanglt-extensions-proof-rules}
\end{figure}

\subsection{Proving $\texttt{deref}$}

Figure~\ref{fig:deref-proof} proves $\texttt{deref}$ against the specification given in Section~\ref{sec:lat} of the main text.

\begin{figure}
\begin{mathpar}
  \inferrule*[Right=\ruleref{hoare-def}]
    { 
      \inferrule*[rightstyle=\rmfamily,Right=pure step]
        {
        \inferrule*[Right=\fakeruleref{wp-atomic}]
          {
            \inferrule*[rightstyle=\rmfamily,Right=apply $\vsopen$]
              {
                \inferrule*[rightstyle=\rmfamily,Right=\textsc{fup-frame} (not shown)]
                  {
                    \inferrule*[rightstyle=\rmfamily,Right=\textsc{wp-frame} (not shown)]
                      { \square \left( (\ell \mapsto w) \wand  \wpre{\deref\ell} [\emptyset] {v.\, \Phi(v) \ast \ell \mapsto w} \right) }
                      {\inferrule*[rightstyle=\rmfamily,Right=\textsc{wp-mono} (not shown)]
                          {\square \left( (\ell \mapsto w \ast \vsclose) \wand  \wpre{\deref\ell} [\emptyset] {v.\, \Phi(v) \ast \ell \mapsto w \ast \vsclose} \right)}
                          {\square \left( (\ell \mapsto w \ast \vsclose) \wand  \wpre{\deref\ell} [\emptyset] {v.\, \Phi(v) \ast \pvs[\emptyset][\top] \TRUE} \right)}}
                  }
                  {\square \left( (\ell \mapsto w \ast \vsclose) \wand  \wpre{\deref\ell} [\emptyset] {v.\, \pvs[\emptyset][\top] \Phi(v)} \right)}
              }
              { \square \left( (\vsopen \ast \vsclose) \wand \pvs[\top][\emptyset] \wpre{\deref\ell} [\emptyset] {v.\, \pvs[\emptyset][\top] \Phi(v)} \right) }
            }
            { \square \left( (\vsopen \ast \vsclose) \wand  \wpre{\deref\ell} {v.\, \Phi(v)} \right) }
        }
        {\square \left( (\vsopen \ast \vsclose) \wand \wpre{\texttt{deref}\spac\ell} {v.\, \Phi(v)} \right) }
    }
    { \{ \underbrace{(\pvs[\top][\emptyset] \ell \mapsto w)}_{\vsopen} \ast \underbrace{(\ell \mapsto w \vsW[\emptyset][\top] \Phi(w))}_{\vsclose} \}
      ~\texttt{deref}\spac\ell~
      \{v.\, \Phi(v) \} }
\end{mathpar}
\caption{Proof of \lstinline{deref} (main text, Section~\ref{sec:lat}) against its specification.}%
\label{fig:deref-proof}
\end{figure}

\subsection{Big-step weakest precondition}

To reason about big-steps (see Section~\ref{sec:heaplanglt-appendix}), we define the big-step weakest precondition Figure~\ref{fig:bs-weakestpre}.
This weakest precondition admits the expected reasoning rules (Figure~\ref{fig:bs-weakestpre-reasoning}).

\begin{figure}
  \begin{mathpar}
    \inferrule[BigStep]
    { (\expr, \istate) \stepiminusstar[\nil]{\theta} (v, \istate', \nil) }
    { (\expr, \istate) \Downarrow_{\theta} (v,\istate') }
  \end{mathpar}
  \caption{Definition of big-step relation as finite number (${}^{*}$) of modified thread steps (${}_{\textlog{t-}}$) to a value by the same thread.
    The modified thread-step relation $\stepiminus[\nil]{\theta}$ uses duplicates of the regular thread- and head-step relations but disallows forks and atomic blocks.
  }%
  \label{fig:big-step}
\end{figure}

\begin{figure}
\begin{align*}
    \bswp \eqdef{}& \lambda \mask, \expr, \pred. \All \istate, n_s, \vec\obs, n_t. \stateinterp(\istate, n_s, \vec\obs, n_t) \vsW[\mask][\mask] {}\\
        & \Exists \istate', v. \ulcorner \expr, \istate \Downarrow v, \istate' \urcorner \ast \stateinterp(\istate', n_s, \vec\obs, n_t) \ast \pred(v) {}\\
    \bswpre\expr[\mask]{\Ret\val. \prop} \eqdef{}& \bswp(\mask,\expr,\Lam\val.\prop)
\end{align*}
\caption{Our weakest precondition for big-step evaluation.
  Unlike Iris' general weakest precondition~\cite[p.~35]{iris-technical-reference}, the big-step weakest precondition is specialized to \heaplanglt and in particular a fixed state interpretation. Moreover, it directly uses the big-step relation defined on a subset of \heaplanglt that notably excludes the fork command. Finally, \bswp does not allow stuck computations.
}%
\label{fig:bs-weakestpre}
\end{figure}

\begin{figure}
\begin{mathpar}
\infer[Big-step-atomic]
{}{\pvs[\mask_1][\mask_2] \bswpre\expr[\mask_2]{\Ret\var. \pvs[\mask_2][\mask_1]\prop}
 \proves \bswpre\expr[\mask_1]{\Ret\var.\prop}}

\infer[Big-step-bind]
{\text{$\lctx$ is a context}}
{\bswpre\expr[\mask]{\Ret\var. \bswpre{\lctx(\ofval(\var))}[\mask]{\Ret\varB.\prop}} \proves \bswpre{\lctx(\expr)}[\mask]{\Ret\varB.\prop}}

\infer[Big-step-AtomicBlock]
{}
{\bswpre\expr[\mask]{\Ret\var.\prop}\proves \wpre{\langle\expr\rangle}[\mask]{\Ret\var.\prop}}
\end{mathpar}
\caption{Reasoning rules for $\textdom{wp}^{\Downarrow}$. Here, ``context'' really refers to a list of \heaplanglt-specific context items (cf.~\cite[p.~36]{iris-technical-reference}).}%
\label{fig:bs-weakestpre-reasoning}
\end{figure}

\subsection{Inlined \spinlock}

Figure~\ref{fig:unsound-fine-inlined} shows a simple \heaplanglt program with the bodies of \lstinline{acquire} and \lstinline{release} inlined.

\begin{figure}
\begin{tabular}{c}
\lstinputlisting{code/trivdep/inlined-liveness/main.hllt}
\end{tabular}\\
\begin{tabular}{c||c}
\lstinputlisting{code/trivdep/inlined-liveness/a.hllt} & 
\lstinputlisting{code/trivdep/inlined-liveness/b.hllt}
\end{tabular}
\caption{Inlined \spinlock \lstinline{acquire} and \lstinline{release}.
  Recall that compare-and-swap (\lstinline{CAS}) returns true if the swap is succesfully executed and false otherwise.).
  We use the shorthand $\extok{} \eqdef \ownGhost{\gamma}{\exinj()}$ to represent an exclusive ghost token.}%
\label{fig:unsound-fine-inlined}
\end{figure}

\section{Motivating example}

We give a proof outline for the instrumented (\heaplanglt) version of the motivating example.
The explanation proceeds in four steps:
\begin{enumerate}
  \item We show the proof outline for the \spinlock operations against the general unfair specification given in the paper.
  \item We show the client's invariant and highlight key aspects of it.
  \item We show proof outlines for the client's auxiliary functions $\abeta$ and $\aalpha$.
  \item We explain how the invariant and thread-local resources let us derive the necessary total correctness logically atomic updates to invoke \lstinline{acquire} and \lstinline{release}.
\end{enumerate}

\subsection{\Spinlock proof}

First, Figures~\ref{fig:app-spinlock-new}, \ref{fig:app-spinlock-release} and~\ref{fig:app-spinlock-acquire} show proof outlines for the instrumented \spinlock against our total correctness unfair lock specification.

\begin{figure}
\end{figure}

\begin{figure}
  \begin{subfigure}{.4\textwidth}
  $L(x,b) \eqdef x \mapsto b$
  \caption{Lock predicate for \spinlock.}
  \end{subfigure}
  \begin{subfigure}{.4\textwidth}
  \begin{tabular}{c}
    \lstinputlisting{code/appendix/spinlock/newlock.hllt}
  \end{tabular}
  \caption{Create new \spinlock.}%
  \label{fig:app-spinlock-new}
  \end{subfigure}
  \caption{\Spinlock representation and creation.}
\end{figure}

\begin{figure}
  \begin{tabular}{c}
    \lstinputlisting{code/appendix/spinlock/release.hllt}
  \end{tabular}
  \caption{Release \spinlock.}%
  \label{fig:app-spinlock-release}
\end{figure}

\begin{figure}
  \begin{tabular}{c}
    \lstinputlisting{code/appendix/spinlock/acquire.hllt}
  \end{tabular}
  \caption{Acquire \spinlock.
    }%
  \label{fig:app-spinlock-acquire}
\end{figure}

\subsection{Client invariant}

\begin{figure}
  \begin{align*}
    I(lk,f,l_2,l_3,\gnameowner,\gnamehandoff,\gname_{l_2},\gname_{l_3}) ={}& \exists b,d,s_2,s_3.\, L(\lockvar, b) \ast f \mapsto d \ast l_2 \mapsto s_2 \ast l_3 \mapsto s_3 \ast{} \\
    \mathsf{Progress}\qquad & \left(\begin{array}{l}
             (b = \myfalse \ast \ownerNone) \lor{}\\
             \left(\begin{array}{l}(b = \mytrue \ast \left(\begin{array}{l}(\ownerLM \ast \signal{s_2}{0}{\myfalse} \ast \shot{l_2}{s_2}) \lor{}\\ (\ownerR \ast \signal{s_3}{0}{\myfalse} \ast \shot{l_3}{s_3})\end{array}\right)\end{array}\right)
           \end{array}\right) \ast {} \\
    \mathsf{Handoff}\qquad & \left(\begin{array}{l}
             d = \mytrue \lor{} \\
             \left(d = \myfalse \ast \left(\extok{\namehandoff} \lor \left(\ownerLM \ast \obligations{\set{(s_2,0)}}[2]\right)\right)\right)
           \end{array}\right)
  \end{align*}
  \caption{%
    Inspired by the appendix of~\citet{DBLP:journals/pacmpl/SpiesGTJKBD22}, we use the shorthands
     $\extok{name}   \eqdef \ownGhost{\gamma_{\mathrm{name}}}{\exinj()}$,
     $\pending{name} \eqdef \ownGhost{\gamma_{\mathrm{name}}}{\mathsf{pending}}$, and
     $\shot{name}{v}    \eqdef \ownGhost{\gamma_{\mathrm{name}}}{\mathsf{shot}(v)}$.
     The exclusive and one-shot cameras are described by \citet{DBLP:journals/jfp/JungKJBBD18} and in Iris' technical reference~\cite{iris-technical-reference}.
    Thread~1 is the left thread, thread~2 the middle thread and thread~3 the right thread.
   }%
   \label{fig:motivating-example-client-inv}
\end{figure}

Figure~\ref{fig:motivating-example-client-inv} shows the (client) invariant we use to verify the instrumented motivating example.

The protocol enforced by the invariant encompasses two aspects controlled by $b$ and $d$:
\begin{itemize}
  \item $\mathsf{Progress}$: When the lock is held ($b = \mytrue$), there exists an unset signal to justify busy-waiting.
    Specifically, if the left or middle thread own the lock ($\ownerLM$) the middle thread must hold the obligation for signal $s_2$.
    If the right thread owns the lock $\ownerR$ it must hold the obligation for signal $s_3$.
  \item $\mathsf{Handoff}$: When the flag \lstinline{f} has been unset ($d = \myfalse$) the middle thread can complete the handoff between itself and the left thread---obtaining ownership.
\end{itemize}

This invariant uses the shorthand $\extok{} \eqdef \ownGhost{\gamma}{\exinj()}$ for exclusive tokens again.
Moreover, we use the oneshot camera~\cite{DBLP:journals/jfp/JungKJBBD18} with the shorthands
\( \pending{} \eqdef \ownGhost{\gamma}{\mathsf{pending}} \), and
\( \shot{}{v} \eqdef \ownGhost{\gamma}{\mathsf{shot}(v)} \).
The exclusive resource $\pending{}$ is required to set the value to $v$ and obtain the persistent resource $\shot{}{v}$.
This ensure that \lstinline{/*l_1*/} and \lstinline{/*l_2*/} can only be set once to a signal.

\subsection{Client-specific auxiliary code}

\begin{figure}
\begin{tabular}{c}
\begin{lstlisting}
$\{ l_2 \mapsto \_ \ast \obligations{\emptyset}[2] \}_{\emptyset}^{\Downarrow}$
l_2 := NewSignal 2 0
$\{ l_2 \mapsto s_2 \ast \obligations{(s_2,0}[2] \ast \signal{s_2}{\myfalse}{0} \}_{\emptyset}^{\Downarrow}$
\end{lstlisting}
\end{tabular}
\caption{Big-step weakest precondition of the left thread's $\aalpha$.}%
\label{fig:app-aalpha-left}
\end{figure}

\begin{figure}
\begin{tabular}{c}
{
\def\sldeg{\degree_3}
\begin{lstlisting}
$\{ \signal{s_3}{0}{\myfalse} \ast l_3 \mapsto s_3 \ast \startedvar \mapsto b \ast ((b = \myfalse \ast \callp{\degree_2}) \lor (b = \mytrue \ast \waitp{s_3}{\sldeg})) \}_{\emptyset}^{\Downarrow}$
if not !started then (
$\{ \startedvar \mapsto b \ast \callp{\degree_2} \}_{\emptyset}^{\Downarrow}$
  NewExpectPerm !l_3 δ_2 δ_3;
  started = true
$\{ \startedvar \mapsto \mytrue \ast \waitp{s_3}{\sldeg} \}_{\emptyset}^{\Downarrow}$
);
$\{ \signal{s_3}{\myfalse}{0} \ast \waitp{s_3}{\sldeg} \ast l_3 \mapsto s_3 \ast \startedvar \mapsto \mytrue \}_{\emptyset}^{\Downarrow}$
Expect !l_3
$\{ \callp{\sldeg} \ast \signal{s_3}{\myfalse}{0} \ast \waitp{s_3}{\sldeg} \ast l_3 \mapsto s_3 \ast \startedvar \mapsto \mytrue \}_{\emptyset}^{\Downarrow}$
\end{lstlisting}
}
\end{tabular}
\caption{Big-step weakest precondition of left thread's \abeta.}%
\label{fig:app-abeta-left}
\end{figure}

\begin{figure}
\begin{tabular}{c}
\begin{lstlisting}
$\{ l_2 \mapsto s_2 \ast \obligations{(s_2,0)} \ast \signal{s_2}{\myfalse}{0} \}_{\emptyset}^{\Downarrow}$
SetSignal !l_2
$\{ l_2 \mapsto s_2 \ast \obligations{\emptyset} \ast \signal{s_2}{\mytrue}{0} \}_{\emptyset}^{\Downarrow}$
\end{lstlisting}
\end{tabular}
\caption{Big-step weakest precondition of \aalpha (middle thread)}%
\label{fig:app-aalpha-middle}
\end{figure}

\begin{itemize}
  \item Figure~\ref{fig:app-aalpha-left} shows the proof outline for the auxiliary code of the left thread's $\aalpha$ (\lstinline{acquire}).
  \item Figure~\ref{fig:app-abeta-left} shows the proof outline for the auxiliary code of the left thread's $\abeta$ (\lstinline{acquire}).
  \item Figure~\ref{fig:app-aalpha-middle} show the proof outline for the auxiliary of the middle thread's $\aalpha$ (\lstinline{release}).
    This is the $\aalpha$ for the release matching the lock acquisition by the left thread.
\end{itemize}

Notice that these do not yet correspond to the big-step weakest preconditions required for the total correctness logically atomic update ($\tatomicupdate$).
What is missing are the linear view shift to close the invariant.

\subsection{Deriving the total correctness logically atomic update}

Recall that $\tatomicupdate$ is defined as greatest fixpoint.
We use \ruleref{gfp-intro} to introduce it.
\begin{mathpar}
  \inferH{gfp-intro}
    {P \proves F(P)}
    {P \proves \mathsf{gfp}(P.\, F(P))}
\end{mathpar}
Thus, for \lstinline{acquire} we're looking for \emph{wait invariant} resource $\waitinv$ such that:
\begin{mathpar}
  \inferrule*[Right=\ruleref{gfp-intro}]
    {\waitinv \proves \pvs[\top\setminus\mask][\emptyset] \exists b.\, \later L(\lockvar,b) \ast{} \\
    \left(\begin{array}{l}
      ((b = \mytrue \ast \obligations{O}) \wand \bswpre {\abeta\spac\TT}[\emptyset]{\callp{\sldeg} \ast \obligations{O} \ast (L(\lockvar,\mytrue) \vsW[\emptyset][\top\setminus\mask] \waitinv)}) \land{} \cr
      ((b = \myfalse \ast \obligations{O}) \wand \bswpre {\aalpha\spac\TT} [\emptyset] {(L(\lockvar,\mytrue) \vsW[\emptyset][\top\setminus\mask] \Phi)})
    \end{array}\right)
    }
    {\waitinv \proves {}_{\sldeg}^{\lev;O}\langle b \eventuallybeta \set{\myfalse}.\, L(\lockvar,b) \midalpha L(\lockvar,\mytrue) \ast b = \myfalse \vs \Phi \rangle_{\mask}}
\end{mathpar}

\subsubsection{Left thread \lstinline{acquire}}%
\label{sec:left-thread-acquire}

The client picks $\Phi$ and $\waitinv$ as follows:
\begin{align*}&
  \Phi \eqdef \obligations{\emptyset} \ast \obligations{\set{(s_2,0)}}[2] \ast \ownerLM \\&
  \waitinv \eqdef \knowInv{\namesp}{I} \ast \pending{l_2} \ast \obligations{\emptyset}[2] \ast (\exists c.\, \startedvar \mapsto c \ast ((c = \myfalse \ast \callp{\degree_2}) \lor (c = \mytrue \ast \waitp{s_3}{\degree_3}))
\end{align*}

Thus we derive \lstinline{acquire}'s $\tatomicupdate$ as follows:
\begin{itemize}
  \item Atomic precondition: $\knowInv{\namesp}{I}$ is part of $\waitinv$.
    Opening the invariant we obtain (among other resources) $\later \exists b.\, L(\lockvar,b)$.
    Later and exists commute since $\mathbb{B}$ is inhabited giving us $\later L(\lockvar,b)$.
  \item $\abeta$: To run $\abeta$, the module must show that $b$ is $\mytrue$. Therefore the $\mathsf{Progress}$ part of the invariant simplifies and---given that $\waitinv$ includes $\pending{l_2}$---the disjunct with $\ownerR$ must hold.
    At this point deriving the precondition of $\abeta$ (Figure~\ref{fig:app-abeta-left}) is relatively straightforward.

    The local resources of $\waitinv$ can be established from the expect permission and the fact that $\startedvar \mapsto \mytrue$.
    Moreover, restoring $I$ is easy since $I$'s resources were not modified.
    Thus, given $\later L(\lockvar,\mytrue)$ we can close the invariant again.
  \item $\aalpha$:
    To run $\aalpha$, the module must show that $b$ is $\myfalse$.
    Its precondition (Figure~\ref{fig:app-aalpha-left}) can be derived from the local resource $\obligations{\emptyset}[2]$ of $\waitinv$, and $l_2 \mapsto \_$ as given by $I$.
    Assuming the postcondition of $\aalpha$, we turn $\pending{l_2}$ into $\shot{l_2}{s_2}$, $\ownerNone$ into $\ownerLM$  and deposit these together with $\signal{s_2}{\myfalse}{0}$ to restore $I$.
    The other half of $\ownerLM$ and $\obligations{\set{(s_2,0)}}[2]$ are equal to $\Phi$.
    Thus, together with $\later L(\lockvar,\mytrue)$ it is now possible to close the invariant.
\end{itemize}

\subsubsection{Handoff between left and middle thread}%
\label{sec:left-middle-handoff}

The left thread opens the invariant, set \lstinline{f} to false, and deposits the right conjunct of $\mathsf{Handoff}$ with $d = \myfalse$.
Either the middle thread reads \lstinline{f} and discovers it to be set.
In that case the program exits completing the proof trivially.
Otherwise, it exchanges its token $\extok{\namehandoff}$ for the right part of that disjunction.

\subsubsection{Middle thread \lstinline{release}}%
\label{sec:middle-thread-release}

We do the same for $\aalpha$ of the middle thread's \lstinline{release}.
That is we derive
${}_{\sldeg}\langle L(\lockvar,\mytrue) \midalpha L(\lockvar,\myfalse) \vs \Phi \rangle_{\mask}^{\lev;O}$
with $\waitinv$ and $\Phi$ defined as follows.

\begin{align*}&
  \Phi \eqdef \TRUE \\&
  \waitinv \eqdef \knowInv{\namesp}{I} \ast \ownerLM \ast \obligations{\set{(s_2,0)}}
\end{align*}

\begin{itemize}
  \item Atomic precondition: From the invariant $I$ and separately $\ownerLM$ we can conclude that $\later L(\lockvar,\mytrue)$ must hold.
  \item $\aalpha$: Moreover, because $\ownerLM$ is part of the local resources of $\waitinv$ we can conclude that the $\ownerLM$ part of $\mathsf{Progress}$ holds.
    Therefore $\aalpha$'s precondition holds (Figure~\ref{fig:app-aalpha-middle}).
    Assuming its postcondition, we can close the invariant by combining the local fraction $\ownerLM$ with the one from the invariant.
    This allows us to obtain $\ownerNone$.
    $\ownerNone$ forces $L(\lockvar,\myfalse)$.
    Thus $\later L(\lockvar,\myfalse)$ is required to close up the invariant; as desired.
  \item Notice that $\aalpha$ gives the module $\obligations{\emptyset}$, which trivially satisfies $\lev \prec \emptyset$.
    Recall, that the obligations returned from \lstinline{release} (Section~\ref{sec:tclat-unfair}) must exactly be those returned from $\aalpha$ i.e.\ $\obligations{\emptyset}$.
\end{itemize}

\subsubsection{Right thread}

The calls to \lstinline{acquire} and \lstinline{release} can be justified in the same manner.

\subsection{Erasure of instrumented motivating example}

Figure~\ref{fig:app-instrumented-erased} shows the erasure of the instrumented motivating example.

\begin{figure}
  \begin{tabular}{c}
    \lstinputlisting{code/appendix/motivating-example-spinlock-client/erased.hllt}
  \end{tabular}
  \caption{Erasure of the motivating example.}%
  \label{fig:app-instrumented-erased}
\end{figure}

\begin{figure}
\begin{tabular}{c}
\lstinputlisting{./code/appendix/motivating-example-spinlock-client/main.hllt}
\end{tabular}

\caption{Proof outline for motivating example. $\degree_3$ is the embedding of $\sldeg$ into the client degrees (see Section~\ref{sec:motivating-example-revisited} of main text).}
\end{figure}

\section{\Ticketlock}

Figure~\ref{fig:ticketlock-extended} shows the \ticketlock's predicate, invariant and a proof outline for \lstinline{acquire} and \lstinline{release}.

\subsection{\Ticketlock predicate and invariant}

A key idea of the lock predicate $L$ and invariant $I$ are the fractional permissions split between them.
This forces the lock predicate and invariant to be kept in sync.
More specifically, in order to modify these shared points-to's, both the client's invariant (opaque to the module) and the lock invariant (opaque to the client) need to be opened, updated, and closed.

Other key details include:
\begin{itemize}
  \item Matching the \ticketlock's implementation, the lock is held iff the owner field is less than the next (available ticket) field.
    Note that this enforces the execution of the external linearization point at \lstinline{release} if threads are queued on the \ticketlock.
  \item Concretely, the ghost map $M$ keeps track of the \emph{helping exchange}~\cite{DBLP:conf/popl/JungSSSTBD15,DBLP:journals/pacmpl/SpiesGTJKBD22}.
    For each ticket $k$ and current value of the owner field $o$, $M$ holds one of:
    \begin{itemize}
      \item The total correctness logically atomic update to linearize the $k$'th \lstinline{acquire} if $k < o$.
        The later credit accompanying the total correctness logically atomic update (see $\tatomicupdate'$) allows stripping the later from (the not timeless) $\tatomicupdate$.
      \item The $k$'th linearization's $\Phi$ if $k = o$.
      \item An exclusive token picked by the $k$'th acquire so that it can take out $\Phi$.
    \end{itemize}
\end{itemize}

Finally, the given \ticketlock predicate nests the module's invariant inside the lock predicate.
This makes the lock's specification more concise, only requiring the lock predicate.
\citet{DBLP:journals/pacmpl/SpiesGTJKBD22}'s addition of later credits to \iris make this nesting quite manageable.
Notice the auxiliary dummy step taken at the beginning of \lstinline{acquire} and \lstinline{release} in Figure~\ref{fig:ticketlock-impl-proof-outline-appendix}.
These generate a later credit that let us strip the later from the invariant%
  \footnote{Invariants are not timeless.}
  nested inside the lock predicate.
Since the lock invariant can be duplicated, \lstinline{acquire} can keep a copy of the invariant.
This way checking the loop condition after \lstinline{acquire}'s linearization (giving up $L(\lockvar,\mytrue)$) is permitted.

An alternative approach is to define a persistent lock predicate $\mathit{is\_lock(\lockvar)}$---also returned at lock creation time---that is additionally required for \lstinline{acquire} and \lstinline{release}.
The machine-checked \verifast proof in the \supplementorartifact follows this approach.

\begin{figure}
\def\ownfld{\mathit{owner}} %
\def\nextfld{\mathit{next}} %
\def\heldfld{\mathit{held}} %
\def\mapnname{\mathsf{M}} %
\def\levelsmod{\Lev_{\mathsf{TL}}}
\def\gheld{\gname_{\mathsf{held}}}
\def\maskmod{\mask_{\mathsf{TL}}}
\raggedright
\begin{subfigure}{\textwidth}
\begin{tabular}{cc}
\begin{lstlisting}
let acquire lk /*η κ*/ =
  $\{ \callp{\top_{\mathsf{TL}}} \ast \obligations{O} \ast \Lev_{\mathsf{TL}} \prec O \ast \tatomicupdate \}$
  /*let _ = () in (* $\cursivecomment[myghostcode]{Get later credit}$ *)*/
  $\{ \laterCredit{1} \ast \callp{\top_{\mathsf{TL}}} \ast \obligations{O} \ast \Lev_{\mathsf{TL}} \prec O \ast \tatomicupdate \}$
  let t = (* $\cursivecomment{Ticket}$ *)
  /*⟨if !(lk.next) = lk.owner
   then (* $\cursivecomment[myghostcode]{\textbf{Acq-LP-1}}$ *) κ () 
   else ($\color{myghostcode}\textsc{fairness}$;
     lk.TAUs[!(lk.next)] := κ);*/
   FAA lk.next 1/*⟩*/
  in
  $\{ \knowInv{\mask}{I(\lockvar,\gname)} \ast t \fgmapsto[\gname_M] (\theta,O,\Lev_{\mathsf{TL}},\abeta,\aalpha,\Phi,\gname_t) \ast \ownGhost{\gname_t}{\exinj()} \}$
  while
  /*⟨if !(lk.owner != t) then η () else ();*/
   !(lk.owner) != t/*⟩*/
  do () done
  $\{ \Phi \}$
\end{lstlisting} &
\begin{lstlisting}
let release lk /*κ*/ =
  /*let _  = () in (* $\cursivecomment[myghostcode]{Get later credit}$ *)*/
  $\{ \callp{\top_{\mathsf{TL}}} \ast \tatomicupdate \ast \laterCredit{1} \}$
  /*⟨
  $\left\{\begin{array}{l}
    \lockvar.\ownfld \mapsto o \ast \lockvar.\nextfld \mapsto n \ast \gheld \mapsto \mytrue \ast{} \\
    \ldots \cursivecomment[myspecs]{Rest of I} \ast{} \\
    (L(\lockvar,\myfalse,o + 1) \vsW[\emptyset][\top\setminus\mask] \Phi)
  \end{array}\right\}_{\emptyset}$
   (* $\cursivecomment[myghostcode]{\textbf{Rel-LP}}$ *) κ ();*/
   FAA lk.owner 1/*;
  $\left\{\begin{array}{l}
    \lockvar.\ownfld \mapstofrac{1}{2} o + 1 \ast \lockvar.\nextfld \mapsto n \ast{}\\
    \gheld \mapstofrac{1}{2} \myfalse \ast \ldots{} \cursivecomment[myspecs]{Rest of I} \ast{}\\
    {}\ast \Phi
  \end{array}\right\}_{\top\setminus\mask}$
   if !(lk.owner) < !(lk.next)
   then (* $\cursivecomment[myghostcode]{\textbf{Acq-LP-2}}$ *)
  $\{ \tatomicupdate' \cursivecomment[myspecs]{for}~ k = o + 1 \}_{\top\setminus\mask}$
     lk.TAUs[!(lk.owner)] ()
  $\{ \Phi_k \ast k = o + 1 \}_{\top\setminus\mask}$
   else
  $\{ o + 1 = n ~\cursivecomment[myspecs]{outside domain of} \Sep{} \}_{\top\setminus\mask}$
    ()
  ⟩*/
  $\{ \Phi \ast \obligations{O} \ast \Lev_{\mathsf{TL}} \prec O \}$
\end{lstlisting}
\end{tabular}
\caption{\Ticketlock implementation.
  The proof outline references definitions of \ref{fig:ticketlock-predicate-invariant-appendix}.}%
\label{fig:ticketlock-impl-proof-outline-appendix}
\end{subfigure}

\begin{subfigure}{\textwidth}
\def\ownfld{\mathit{owner}} %
\def\nextfld{\mathit{next}} %
\def\heldfld{\mathit{held}} %
\def\mapnname{\mathsf{M}} %
\def\levelsmod{\Lev_{\mathsf{TL}}}
\def\gheld{\gname_{\mathsf{held}}}
\def\maskmod{\mask}
\begin{align*}
& \setlength\fboxsep{1pt}
\tatomicupdate'(\theta,O,\Lev_{\mathsf{TL}},\abeta,\aalpha,\Phi) \eqdef \obligations{O}[\theta] \ast \Lev_{\mathsf{TL}} \prec O \ast{}\\&\qquad
  \laterCredit{1} \ast \langle b,o \eventuallybeta_{o} \set{\myfalse} \times \nat.\, L(\lockvar,b,o) \midalpha L(\lockvar,\mytrue,o) \ast b = \myfalse \rangle_{\theta,\top\setminus\maskmod}^{O,\levelsmod} \\ &
L(\lockvar,b,o,\gname) \eqdef \gname \mapsto_{\square} (\gheld, \gname_M) \ast \gheld \mapstofrac{1}{2} b \ast \lockvar.\ownfld \mapstofrac{1}{2} o \ast \knowInv{\maskmod}{I(\lockvar,\gname)} \\& \setlength\fboxsep{1pt}
I(\lockvar,\gname) \eqdef \exists b,o,n,M.\,
  (o \leq n) \ast (b \leftrightarrow o < n) \ast \dom(M) = [0,n) \ast{} \\&\qquad
  \lockvar.\ownfld \mapstofrac{1}{2} o \ast \gheld \mapstofrac{1}{2} b \ast \lockvar.\nextfld \mapsto n \ast \ownGhost{\gname_M}{M} \ast{} \\&\qquad
  \Sep_{k \mapsto (\theta,O,\Lev_{\mathsf{TL}},\abeta,\aalpha,\Phi,\gname_k) \in M} \hspace{-3em} \bigl( (\tatomicupdate'(\theta,O,\Lev_{\mathsf{TL}},\abeta,\aalpha,\Phi) \ast o < k) \lor (\Phi \ast o = k) \lor  (\ownGhost{\gname_k}{\exinj()} \ast o \geq k) \bigr)
\end{align*}
\caption{Lock predicate for \ticketlock implementation. Parts highlighted in green mark additions necessary to satisfy the fair specification.}%
\label{fig:ticketlock-predicate-invariant-appendix}
\end{subfigure}
\caption{\Ticketlock implementation and proof outline.}%
\label{fig:ticketlock-extended}
\end{figure}

\section{\Cohortlock}

\subsection{$\abeta$s}

We briefly recall key details from Section~\ref{sec:cohortlock} of the main text.

\paragraph{Top-level lock (TL)}

The \cohortlock is held by the client whenever TL is held.
Thus, TL's $\abeta$ can use the client $\abeta$ whenever TL is held.
Recall that TL can be passed without competition from other cohorts MAX number of times.
This means the TL round does not change while the \cohortlock round does.
Thus, for each TL round, TL's $\abeta$ may need to call the \cohortlock client's \abeta up to MAX times.

\paragraph{Local lock (LL)}

For each local lock round, \lstinline{acquire} may be blocked on the ``client'' (here the \cohortlock implementation) if:
\begin{itemize}
  \item The TL, and thus the \cohortlock, is held by the same cohort as the LL \lstinline{acquire}: $\abeta$ can invoke the \cohortlock client's $\abeta$ to justify progress.
  \item The LL is held but TL has not yet been acquired by a member of the cohort: the LL $\abeta$ can expect the acquire signal.
  \item The LL is held, the TL has been released by a member of the cohort but LL's release has not yet occurred: the LL $\abeta$ can expect the release signal.
\end{itemize}

\subsection{Levels}

Let $\Lev_{\fairlockabbr}$ be the set of levels for a fair lock module used for the top-level lock (TL) and the local locks (LLs).
\begin{align*}
  \Lev_{\cohortlockabbr} \eqdef \set{0} \times \Lev_{\fairlockabbr} \cup \set{(1,0)} \cup \set{2} \times \Lev_{\fairlockabbr}
\end{align*}
The embedded levels are ordered lexicographically.
We use $\set{0} \times \Lev_{\fairlockabbr}$ for the TL,
$(1,0)$ for the acquire and release signals, and
$\set{2} \times \Lev_{\fairlockabbr}$ for the LLs.

\subsection{Degrees}

Let $\Deg_C^{\mathsf{eta}}$ be the client's eta domain of degrees (cf.\ Section~\ref{sec:ticketlock} of the main text) with the named degrees $\top_C^{\mathsf{eta}}$ and $\bot_C^{\mathsf{eta}}$.
We define $\Deg_\mathsf{TL}^{\mathsf{eta}}$ and $\Deg_\mathsf{LL}^{\mathsf{eta}}$ in terms of their local domain of degrees:
\begin{align*}&
  \Deg_\mathsf{TL}^{\mathsf{eta}} \eqdef \set{0} \times \Deg_C^{\mathsf{eta}} \cup \set{(1,0)} \\&\qquad
    \top_\mathsf{TL}^{\mathsf{eta}} \eqdef (1,0), \bot_{\mathsf{TL}}^{eta} \eqdef (0,\bot_{\mathsf{C}}^{\mathsf{eta}}) \\&
  \Deg_\mathsf{LL}^{\mathsf{eta}} \eqdef \set{0} \times \Deg_C^{\mathsf{eta}} \cup \set{(1,0)} \cup \set{(2,0)} \\&\qquad
    \top_\mathsf{LL}^{\mathsf{eta}} \eqdef (2,0), \bot_{\mathsf{LL}}^{eta} \eqdef (0,\bot_{\mathsf{C}}^{\mathsf{eta}})
\end{align*}

\subsection{\lstinline{alone}}

Figure~\ref{fig:ticketlock-alone-spec} shows the specification to \ticketlock's \lstinline{alone} operation.
This specification enforces that if the current owner is not alone, then the next release directly transfers ownership of the \ticketlock to the next waiting thread.

\begin{figure}
{
\begin{align*}&
  \langle L(\lockvar,\mytrue,r) \mid R,v.\, L(\lockvar,\mytrue,r) \ast R \ast \left(\begin{array}{l}
    (v = \myfalse \ast \notalone(\lockvar,r + 1)) \lor{}\\
    (v = \mytrue \ast (R \ast \notalone(\lockvar,r+1) \vs[\emptyset] \FALSE)
    \end{array}\right) \\&\qquad
    \vs \Phi(v) \ast R \rangle_{\top\setminus\mask} \wand \wpre{\texttt{alone}\spac\lockvar}{v.\, \Phi(v)} \\&
  L(\lockvar,\myfalse,r) \ast \notalone(\lockvar,r) \vs[\mask] \FALSE
\end{align*}
}
\caption{Specification of \ticketlock's \lstinline{alone}.
  It ensures two properties. First, if \lstinline{alone} returns false, the next lock release will be a direct handoff i.e.\ linearize the \lstinline{acquire} of the next thread in line.
  It follows that the client will at no point observe the lock not being held in the next round.
  Second, if \lstinline{alone} has returned false for the current round, it will not return true for that same round.}%
\label{fig:ticketlock-alone-spec}
\end{figure}

\section{\heaplanglt}%
\label{sec:heaplanglt-appendix}

The main text presents our extensions to \heaplang: the auxiliary code, and function application---instrumented to consume a call permission at every call.
The following figures complete \heaplanglt's definition.
\begin{itemize}
  \item \heaplanglt's syntax is defined in Figure~\ref{fig:heaplanglt-syntax-full} with additional syntactic sugar defined in Figure~\ref{fig:heaplanglt-syntactic-sugar}.
  \item \heaplanglt's state is defined in Figure~\ref{fig:heaplanglt-state-appendix}.
  \item \heaplanglt's pure and boxed reduction rules are defined in Figure~\ref{fig:heaplang-reduction-pure}.
  \item \heaplanglt's impure reduction steps are shown in Figure~\ref{fig:heaplang-opsem}.
  \item \heaplanglt's evaluation contexts are defined in Figure~\ref{fig:heaplanglt-evalctxs}.
  \item \heaplanglt's big-step evaluation relation is defined in Figure~\ref{fig:big-step}.
    Its definition uses copies of the regular thread-step (Figure~\ref{fig:heaplang-reduction-pure}) and head-step relations that exclude fork-statements and atomic blocks.
\end{itemize}

\begin{figure}[H]
  \input{figures/full-syntax.tex}
  \caption{\heaplanglt's syntax, mostly identical to \heaplang~\cite{iris-technical-reference}.
    Auxiliary code added by \heaplanglt is highlighted in blue.
    Note the additional arguments to \langkw{fork}, explained in Figure~\ref{fig:heaplanglt-extensions-appendix}.
    Syntactic sugar for some constructs is defined in Figure~\ref{fig:heaplanglt-syntactic-sugar}.
  }%
  \label{fig:heaplanglt-syntax-full}
\end{figure}

\begin{figure}[H]
  \input{figures/sugar.tex}
  \caption{\heaplanglt syntactic sugar.}%
  \label{fig:heaplanglt-syntactic-sugar}
\end{figure}

\begin{figure}[H]
  \input{figures/state/heaplanglt-state.tex}
  \caption{\heaplanglt's state. This figure is adapted from~\cite[p.\ 52]{iris-technical-reference}.
    The additions for auxiliary state are shown in blue.
    The well-founded orders on levels ($\Levdom$) and degrees ($\Degdom$) can be modularly constructed (see main text, Section~\ref{sec:tclat-fair}).
  }%
  \label{fig:heaplanglt-state-appendix}
\end{figure}

\begin{figure}
  \input{figures/heaplanglt-pure.tex}
  \caption{\heaplang pure and boxed reduction rules. This figure is taken from \cite{iris-technical-reference}.}%
  \label{fig:heaplang-reduction-pure}
\end{figure}

\begin{figure}
  \input{figures/heaplang-impure-hred.tex}
  \caption{\heaplang's impure head reduction rules. Taken from Iris' technical documentation~\cite{iris-technical-reference}.}%
  \label{fig:heaplang-opsem}
\end{figure}

\begin{figure}
  \input{figures/heaplanglt-evaluation-contexts.tex}
  \caption{\heaplanglt evaluation contexts. The \langkw{let} construct is added but not auxiliary code.
    Auxiliary code added by \heaplanglt is highlighted in blue.
    }%
  \label{fig:heaplanglt-evalctxs}
\end{figure}

\begin{figure}
  \begin{mathpar}
    \let\prophnil\nil
    \newcommand{\inferC}[3]{\inferrule[#1]{#2}{#3}}
    \input{figures/heaplanglt-extensions.tex}
  \end{mathpar}
  \caption{Operations of \heaplanglt that extend or modify \heaplang.
    Reproduction of Figure~\ref{fig:heaplanglt-extensions} in the main text.}%
  \label{fig:heaplanglt-extensions-appendix}
\end{figure}

\begin{figure}
  \input{figures/heaplanglt-machine-steps.tex}
  \caption{Adapted from 
  iris' technical documentation~\cite{iris-technical-reference}.}%
  \label{fig:heaplanglt-machine-steps}
\end{figure}

\begin{figure}
\input{figures/heaplanglt-value-comparison}
\caption{\heaplang value comparison judgment. Copied from~\cite{iris-technical-reference}. }
\label{fig:heaplanglt-valeq}
\end{figure}

\section{Erasable \heaplanglt}%
\label{sec:heaplanglt-erasure}

\begin{figure}
  \let\prophnil\nil
  \begin{mathpar}

    \inferrule
      {
        \degree[0] \in \istate.\stateCallPerms(\theta) \\
        \aux{\expr_1, \istate \Downarrow_{\theta} \val', \istate'}
      }
      { ((\MyRecVI{\lvarF}{\lvar}{\aux{y}}{\expr_2})\spac \val\spac {\color{myghostcode} \expr_1}, \istate)
        \hstepi[\prophnil]{\theta}
        (\subst {\subst {\subst \expr \lvarF {(\MyRecEI{\lvarF}{\lvar}{\color{myghostcode}y}{\expr_2})}} \lvar \val} {\aux{y}} {\aux{\val'}}, \istate' : \stateCallPerms(\theta)\updsetminus\set{\degree[0]}, \nil) }

    \aux{\inferrule{
        \aux{\expr_2}, \istate \Downarrow_{\theta} \aux{\val}, \istate\aux{'} \\
        \aux{\subst{\expr_1}{\lvar}{\val}}, \istate\aux{'} \Downarrow_{\theta} \aux{\val'}, \istate\aux{''}
      }
      { (\aux{(\MyRecVA{\_}{\lvar}{\expr_1})\spac \expr_2}, \istate)
        \hstepi[\prophnil]{\theta}
        (\aux{\val'}, \istate\aux{''}, \nil) }}

    \inferrule
      { z > 0 \\
        \forall i < z.\, \istate.\stateHeap(\ell + i) = \bot \\
        \aux{\forall i < z.\, \istate.\stateHeapA(\ell + i) = \bot}
      }
      {(\AllocNI(z,v), \istate) \hstepi[\prophnil]{\theta} (\ell, \istate : \stateHeap[[\ell, \ell + z) \gets v] \aux{{}: \stateHeapA[[\ell, \ell + z) \gets 0]}, \nil)}

    \aux{\inferrule
      { z > 0 \\
        \forall i < z.\, \istate.\stateHeap(\ell + i) = \bot \\
        \aux{\forall i < z.\, \istate.\stateHeapA(\ell + i) = \bot}
      }
    {(\AllocNA(z,v), \istate) \hstepi[\prophnil]{\theta} (\ell, \istate : \stateHeapA[[\ell, \ell + z) \gets v], \nil)}}

    \inferrule
    {}
    { (\langkw{let}\spac x = v \spac\langkw{in}\spac \expr,\istate) \hstepi[\prophnil]{\theta}} (\expr[v/x],\istate,\nil)

    \aux{\inferrule
      {\expr_1, \istate \Downarrow_{\theta} v, \istate'}
      {(\LetA\spac {x} = {\expr_1}\spac \langkw{in}\spac {\expr_2}, \istate) \hstepi[\prophnil]{\theta} (\expr_2[v/x], \istate', \nil)}}

    \inferrule
      { \istate.\stateHeap(\ell) = \val }
      {(\deref\spac\ell,\istate) \hstepi[\prophnil]{\theta} (\val,\istate,\nil)}

    \aux{\inferrule
      { \istate.\stateHeapA(\ell) = \val }
      {(\derefA\spac\ell,\istate) \hstepi[\prophnil]{\theta} (\val,\istate,\nil)}}

    \inferrule
      { \istate.\stateHeap(\ell) = \val }
      {(\ell \getsR \valB,\istate) \hstepi[\prophnil]{\theta} (\TT,\istate : \stateHeap[\ell \gets \valB],\nil)}

    \aux{\inferrule
      { \istate.\stateHeapA(\ell) = \val }
      {(\ell \getsA \valB,\istate) \hstepi[\prophnil]{\theta} (\TT,\istate : \stateHeapA[\ell \gets \valB],\nil)}}
  \end{mathpar}
  \caption{Step rules of \heaplanglt extensions with adapted definitions compared to the main text to simplify the erasure proof.}%
  \label{fig:heaplanglt-erasable}
\end{figure}

To simplify the erasure argument, we adapt \heaplanglt in a few ways.
Figure~\ref{fig:heaplanglt-erasable} shows the most important changes.
The duplicated constructs let us easily distinguish between real and auxiliary code.

\subsection{Syntactic constraints}%
\label{sec:syntactic-constraints}
A safe \heaplanglt program satisfying the following syntactic constraints is erasable if:
\begin{itemize}
  \item The program is in A-normal form (ANF)~\cite{DBLP:conf/lfp/SabryF92}.
  \item The auxiliary let ($\LetA$) is the only auxiliary code embedded in real code.
    The other auxiliary constructs must appear in the val-body of the auxiliary let expression.
    Note that we consider atomic blocks to be real but that all our uses contain a single, real, atomic expression.
  \item We distinguish between real (instrumented) functions $\MyRecVI{\lvarF}{\lvar}{\aux{y}}{\expr_2}$ and purely auxiliary functions $\aux{\MyRecVA{\lvarF}{\lvar}{\expr}}$.
  \item An auxiliary let bound variable does not appear free in real code.
    It may, however, appear inside auxiliary let expressions inside real code.
    Moreover, our the real (instrumented) $\mu$ construct always takes a second argument where auxiliary let bound variables may appear.
  \item In the val-body of an auxiliary let, there are no assignments to the real heap i.e.\ \lstinline{:=} does not occur.
    The auxiliary assignment \lstinline{:==} may appear and gets stuck on real locations.
  \item Real functions may not be called in auxiliary code.
    This check is syntactic because real (instrumented) functions always take two arguments while auxiliary functions only take on. 
\end{itemize}

\subsection{Erasure}

\begin{figure}
  \begin{align*}
    \eraseop{\istate} = \{ \stateHeap \gets \istate.\stateHeap; \stateProphs \gets \istate.\stateProphs \}
  \end{align*}
  \caption{Store erasure, i.e.\ removing the auxiliary state of \heaplanglt.}%
  \label{fig:erasure-store}
\end{figure}

\begin{figure}
  \begin{align*}
    \eraseop{T;\istate} = \mathsf{map}\spac\mathsf{\eraseop}\spac T; \eraseop{\istate}
  \end{align*}
  \caption{A machine configuration consisting of a list expressions, one per thread, $T$ and the state $\istate$ erases to the erasure of the every expression and the erased state.}%
  \label{fig:erasure-configuration}
\end{figure}

\begin{figure}
  \begin{align*}
    \eraseop{\aux{\LetA\spac {x} = {\expr_1} \langkw{in}} \spac {\expr_2}} = \eraseop{\expr_2} \\
    \eraseop{\MyRecEI{\lvarF}{\lvar}{\aux{y}}{\expr}} = \MyRecVI{\lvarF}{\lvar}{\aux{\TT}}{\eraseop{\expr}} \\
    \eraseop{\langkw{AtomicBlock}\spac \expr} = \langkw{AtomicBlock}\spac \eraseop{\expr_1} \\
    \eraseop{(\expr_1\spac \expr_2\spac \expr_3)} = (\eraseop{\expr_1}\spac \eraseop{\expr_2}\spac \aux{()}) \\
    \text{Other cases by straightforward structural recursion, e.g.:} \\
    \eraseop{\LetR{x}{\expr_1}{\expr_2}} = \LetR{x}{\eraseop{\expr_1}}{\eraseop{\expr_2}}
  \end{align*}
  \caption{Expression erasure.}%
  \label{fig:erasure-expression}
\end{figure}

\begin{theorem}[Erasure]\label{thm:erasure-appendix}
  $\eraseop{e'}$ does not admit infinite fair executions if the augmented program $e';\Finish$ is safe.
\end{theorem}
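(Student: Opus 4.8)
The plan is to argue by contradiction. Assume $\eraseop{e'}$ admits an infinite fair execution $\xi$; I will lift it to an infinite fair execution $\hat\xi$ of the instrumented program $e';\Finish$, contradicting Theorem~\ref{thm:no-infinite-fair-executions}. The safety hypothesis is precisely what licenses the lift: it guarantees that no reachable instrumented configuration is stuck, so whenever $\xi$ takes a step the reconstruction can always supply the matching instrumented step together with the surrounding auxiliary steps.

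The technical core is a simulation lemma relating instrumented and erased reductions through the erasure map of Figure~\ref{fig:erasure-expression}. First I would check that erasure is well defined on configurations meeting the syntactic constraints of Section~\ref{sec:syntactic-constraints}, and that those constraints are an invariant of both semantics. These constraints are exactly what guarantees that real code never observes auxiliary state --- auxiliary-let--bound variables never occur free in real positions, real functions are never applied inside auxiliary code, and auxiliary code performs no writes to the real heap --- so that erasing a configuration commutes with reduction in the following sense. Inspecting the head-step rules (in particular the modified application and auxiliary-let rules of Figure~\ref{fig:heaplanglt-erasable}, which bundle the big-step evaluation of auxiliary arguments and let-bindings into a single head-step), each instrumented head-step either (i) lies on a real redex and maps under $\eraseop{\cdot}$ to the corresponding erased head-step, or (ii) is purely auxiliary (an auxiliary let, a signal/expect/lower operation, or a trailing $\Finish$) and maps to the identity on the erased configuration. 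This is a forward simulation from the instrumented to the erased semantics.

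Running this simulation backwards to build $\hat\xi$ is where safety does the work. Maintaining the invariant that the current instrumented configuration erases to the current configuration of $\xi$, I follow $\xi$'s schedule. Each time $\xi$ advances a thread $\theta$ by a real step, the instrumented thread $\theta$ may first have to discharge a syntactically finite prefix of auxiliary code (auxiliary lets and the big-step evaluation of auxiliary arguments or atomic-block bodies). By safety each such head-step is enabled and each embedded big-step converges --- a diverging auxiliary computation, e.g.\ a non-fuel-consuming auxiliary recursion, would leave the thread stuck --- so this prefix contributes only finitely many instrumented steps, after which $\theta$ performs the one real step whose erasure reproduces $\xi$'s step. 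Thread creation matches because the instrumented $\langkw{fork}$ spawns the same children up to the erasable obligation argument, and thread termination costs only the finitely many trailing steps to $\Finish$, which succeed because a safe program finishes with no outstanding obligations; inserting these promptly keeps $\hat\xi$ fair. Since $\hat\xi$ contains one real step per step of the infinite execution $\xi$ it is infinite, and since per-thread enabledness and the schedule agree with $\xi$ up to interspersed finite auxiliary blocks it is fair --- contradicting Theorem~\ref{thm:no-infinite-fair-executions}.

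The main obstacle is the simulation lemma, and within it three points I expect to be delicate: (i) proving that the syntactic invariant of Section~\ref{sec:syntactic-constraints} survives substitution, which rests on the discipline that real functions take two arguments and auxiliary functions one, so erasure never conflates them and never drags an auxiliary-let variable into a real position; (ii) showing that safety forces every auxiliary big-step the reconstruction relies on to actually converge, so that consecutive real steps are separated by only finitely many instrumented steps and $\hat\xi$ cannot stall on an infinite auxiliary prefix; and (iii) confirming that interleaving these finite auxiliary blocks and the trailing $\Finish$ steps genuinely preserves fairness, i.e.\ does not starve a thread that $\xi$ did not starve. With these in hand the contradiction with Theorem~\ref{thm:no-infinite-fair-executions} is immediate.
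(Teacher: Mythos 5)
Your proposal is correct and follows essentially the same route as the paper's proof: a proof by contradiction that lifts the assumed infinite fair erased execution to an infinite fair instrumented one via a coinductive simulation, with safety guaranteeing that each real step of the erased execution is preceded by only a finite, enabled prefix of auxiliary steps (the paper's base/inductive case split on matching head redexes). Your treatment is in fact somewhat more explicit than the paper's about the delicate points (substitution stability of the syntactic constraints, forks, trailing $\Finish$ steps, and fairness preservation), which the paper's terse argument leaves implicit.
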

\begin{proof}

Assume to the contrary an arbitrary infinite fair execution of $\eraseop{e'}$.
In contradiction of Theorem~\ref{thm:no-infinite-fair-executions} we construct an infinite fair execution of $e';\langkw{Finish}$.

\begin{lemma}[Augmented infinite fair execution]\label{lem:matching-execution}
  Let $T';\istate'$ be an arbitrary, safe, augmented configuration and $\Sigma$  a fair infinite execution starting from $\eraseop{T'};\eraseop{\istate'}$.
  Then, there exists an infinite fair execution starting from $T';\istate'$.
\end{lemma}
\begin{proof}
Let $T$ be the erased list of expression representing the threads and $i$ be the index in $T$ of the thread taking the first step of $\Sigma$.
We compare the head redexes of $T'[i]$ and $T[i]$ before that step.

\textbf{Base case}:\\
If they are the same, we execute the real step in the augmented execution as well.
Notice that real code does not interfere with auxiliary state so that the we can apply the same logic to the following step.
Thus, we apply the lemma coinductively on the resulting two new configurations.

\textbf{Inductive case}:\\
Otherwise, $T[\theta]$ and $T'[\theta]$ do not correspond.
$T'[\theta]$ has the shape of finitely many nested auxiliary let statements before reaching the head redex of $T[\theta]$.
By the safety of $T';\istate'$ we know that all auxiliary steps are safe and terminating.
The finite number of lets are executed until the base case is reached.
\end{proof}

Recall that we assume by contradiction that $[\eraseop{e'}],\emptyset$ (i.e.\ starting from an empty heap) is an infinite fair execution.
Moreover, we know that $e'$ is safe given some state $\istate'$ (with $\istate'.\stateHeap = \emptyset$) satisfying the weakest precondition of $e';\langkw{Finish}$ i.e.\ equipped with sufficient initial call permissions.
By definition $\eraseop{[e';\Finish],\istate'} = [\eraseop{e'}],\emptyset$.
Applying Lemma~\ref{lem:matching-execution} we obtain an infinite fair execution, for $e';\langkw{Finish}$, reaching the desired contradiction.
\end{proof}

\end{document}